\DeclareMathAccent{\wtilde}{\mathord}{largesymbols}{"65}
\newcommand{\wt}[1]{\widetilde{#1}}
\newcommand{\NN}{\mathbb{N}}
\newcommand{\RR}{\mathbb{R}}
\newcommand{\E}{\mathbb{E}}
\let\leqx\leqslant
\newcommand{\doasympleqx}{%
	\hbox{\ooalign{%
			\noalign{\kern.25ex}
			$\leqslant$\cr
			\noalign{\kern1.25ex}
			\smash{$\sim$}\cr
	}}%
}
\newcommand{\doasympdomleq}{%
	\hbox{\ooalign{%
			\noalign{\kern.25ex}
			$\preccurlyeq$\cr
			\noalign{\kern1.25ex}
			\smash{$\sim$}\cr
	}}%
}
\newcommand{\doasympasympdomleq}{%
	\hbox{\ooalign{%
			\noalign{\kern.25ex}
			$\preccurlyeq$\cr
			\noalign{\kern1.25ex}
			\smash{$\sim$}\cr
			\noalign{\kern0.5ex}
			\smash{$\sim$}\cr
	}}%
}
\let\leqx\leqslant
\newcommand{\doasympgeqx}{%
	\hbox{\ooalign{%
			\noalign{\kern.25ex}
			$\geqslant$\cr
			\noalign{\kern1.25ex}
			\smash{$\sim$}\cr
	}}%
}
\newcommand{\doasympasympleqx}{%
	\hbox{\ooalign{%
			\noalign{\kern.25ex}
			$\leqx$\cr
			\noalign{\kern1.25ex}
			\smash{$\sim$}\cr
			\noalign{\kern0.5ex}
			\smash{$\sim$}\cr
	}}%
}
\newtheorem{Theorem}{Theorem}
\newtheorem{Definition}[Theorem]{Definition}
\newtheorem{Lemma}[Theorem]{Lemma}
\newtheorem{Corollary}[Theorem]{Corollary}
\newtheorem{Remark}[Theorem]{Remark}
\def\C{{\mathcal C}}
\def\D{{\mathcal D}}
\def\E{{\mathcal E}}
\def\G{{\mathcal G}}
\def\L{{\mathcal L}}
\def\M{{\mathcal M}}
\def\P{{\mathcal P}}
\def\S{{\mathcal S}}
\def\W{{\mathcal W}}
\def\X{{\mathcal X}}
\def\Y{{\mathcal Y}}
\def\Z{{\mathcal Z}}
\def\hX{\hat{\mathcal X}}
\def\hY{\hat{\mathcal Y}}
\def\NN{{\mathbb N}}
\def\RR{{\mathbb R}}
\def\CH{{\mathcal CH}(\X,\Y)}
\def\CHS{{\mathcal CH}(\X,\S)}
\def\CHc{{\mathcal CH}_c(\X,\Y)}
\def\hCHc{{\mathcal CH}_c(\hX,\hY)}
\begin{document}

\title{Computability of the Zero-Error Capacity of Noisy Channels}

\author{%
  Holger Boche\affil{1,3,4,5}
  and
  Christian Deppe\affil{2,3,}\corrauth
}

\shortauthors{H. Boche and C. Deppe}

\address{%
  \addr{\affilnum{1}}{Chair of Theoretical Information Technology, Technical University of Munich, Germany}
  \addr{\affilnum{2}}{Institute for Communications Technology, Technische Universität Braunschweig, Germany}
  \addr{\affilnum{3}}{6G-life, 6G research hub, Germany}
\addr{\affilnum{4}}{Munich Quantum Valley (MQV), Munich, Germany}
\addr{\affilnum{5}}{Munich Center for Quantum Science and Technology, Munich, Germany}}

\corraddr{christian.deppe@tu-bs.de; Tel: +49 531 391 - 2495, Fax: +49 531 391 - 5192.
}

										
																				


\begin{abstract}
Zero-error capacity plays an important role in a whole range of operational tasks, in addition to the fact that it is necessary for practical applications.
Due to the importance of zero-error capacity, it is necessary to investigate its algorithmic computability, as there has been no known closed formula for the zero-error capacity until now.
We show that the zero-error capacity of noisy channels is not Banach-Mazur computable and therefore
not Borel-Turing computable. We also investigate the relationship between the zero-error capacity of discrete memoryless channels, the Shannon capacity of graphs, and Ahlswede's characterization of the zero-error-capacity of noisy channels with respect to the maximum error capacity of 0-1-arbitrarily varying channels. We will show that important questions regarding semi-decidability are equivalent for all three capacities. So far, the Borel-Turing computability of the Shannon capacity of graphs is completely open. This is why the coupling with semi-decidability is interesting. 
\end{abstract}
\keywords{
\textbf{Turing computability, zero-error capacity, Shannon capacity}
\newline
\textbf{Mathematics Subject Classification:} 94-08}

\maketitle


\section{Introduction}

The zero-error capacity of the discrete memoryless channel (DMC) was introduced by Shannon in 1956 in \cite{S56}. Since then, a lot of papers have been published that compute the zero-error capacity for special channels. It was clear from the beginning of the theory that computing the zero-error capacity of DMCs is a complicated problem. Shannon therefore asked the question whether the zero-error capacity of a DMC can be linked to other error-capacities of suitable channels. Such an interesting connection was shown by Ahlswede in \cite{A70}. He showed that the zero-error capacity of a DMC is equal to the maximum error capacity of a corresponding 0-1 arbitrarily varying channel (AVC).
In this paper the question of the algorithmic computability of the zero-error capacity of DMCs and also the general algorithmic applicability of Ahlswede's and Shannon's result is examined.
\textcolor{black}{As a model for computability, we employ the theory of Turing machines, which is suitable in order to characterize the capabilities of real world digital computers.}

\textcolor{black}{In his original publication on zero-error theory, Shannon proved that the zero-error capacity can alternatively be characterized in terms of graph theory. That is, for each noisy channel, there exists a simple graph that fully determines the channel's zero-error properties. The graph-invariant that equals the zero-error capacity of a corresponding channel has since become known as the \emph{Shannon capacity} of a (simple) graph. In many practical cases, however, the description of the channel in terms of a suitable  graph is not available. Instead, the channel is characterzied by a mapping $W: \X \rightarrow \in\P(\Y)$,
where $\X$ and $\Y$ are finite alphabets and $\P(\Y)$ denotes the set of probability distributions on $\Y$. The zero-error capacity $C_0(W)$ is then computed as a function of \(W\). A prominent example is the problem of remote state estimation and stabilization, see \cite{MaSa07}.}

 In general, the numerical value of $C_0(W)$ is of interest. Usually this is not a rational number, so one tries to compute the number $C_0(W)$ for the first decimal places and tries to prove that these first decimal places are correct. This means that we are trying to find an algorithm that calculates the number $C_0(W)$ for the input $W$ with a specified accuracy.
As already mentioned, Shannon himself showed the relationship between the zero-error capacity of DMCs and the so-called Shannon capacity of graphs. He used graph theory to compute the Shannon capacity for a graph $G_W$ which is usually determined from a DMC $W$ and referred to as the confusability graph of the 
channel $W$ (see \cite{AZ10, haemers1979some, KO98, Lov'asz1979shannon,S03, W01}). 

In \cite{Igal}, the Shannon capacity of two infinite subclasses of strongly regular graphs is determined, along with an exploration of the Shannon capacity of two new types of joins of graphs, and the strengthening and generalization of some earlier results on the Shannon capacity of graphs.

In summary, there are two approaches to estimate the zero-error capacity of noisy channels known today. Shannon couples the zero-error capacity of DMCs with the Shannon capacity of graphs. 
Based on Shannon's question, Ahlswede couples the zero-error capacity of DMCs with the maximum error capacity of 0-1 AVCs.

It is shown in this paper that both approaches are not effective in the following sense, that they are not recursive. In the Shannon approach, the input, which is the corresponding confusability graph of the DMC, cannot be constructed effectively from the DMC $W$, i.e. there is no Turing machine that generates $G_W$ as output for the input $W$. In Ahlswede's approach, for a given DMC $W$, the corresponding 0-1 AVC cannot be constructed effectively depending on the DMC $W$.

Zero-error capacity plays an important role for a whole range of operational tasks. Furthermore, it is very interesting for practical applications.


Zero-error capacity is also useful for the $\epsilon$-capacity of compound channels with average decoding errors even with a compound set of $|S|=2$ elements (see \cite{AADT15}).

We start with basic definitions and results in Section~\ref{basic}. We introduce the concepts of computability and the concepts of the zero-error capacity of noisy channels.
Furthermore, we also investigate the relationship between the zero-error capacity of channels, the Shannon capacity of graphs, and Ahlswede's
characterization.
Due to the importance of zero-error capacity, it is very interesting to investigate the algorithmic computability of DMCs, as there has been no known closed formula for $C_0(\cdot)$ until now.
We will examine this in Section~\ref{results}. We also compare the algorithmic relationship between Shannon's characterization of the zero-error capacity of DMCs and Ahlswede's characterization. We show that the zero-error capacity of noisy channels cannot be computed algorithmically. 
The corresponding question of the Shannon capacity of graphs and the maximal error capacity of 0-1 AVCs is open.
In Section~\ref{AVC} we consider the computability of $\Theta$. First we consider 0-1 AVCs with average error. We show that the average error capacity of these 0-1 AVCs is a computable function. In addition,  $\Theta$ is Borel-Turing computable if and only if the capacity function of 0-1 AVCs with maximum errors is Borel-Turing computable. Finally, in Section~\ref{conclusions}, we give the conclusions and discussions.

Some of the findings from this paper were presented at the IEEE Information Theory Workshop 2021 in Kanazawa, as referenced in \cite{BD20A}.
The results of the IEEE International Symposium on Information Theory 2020 (ISIT 2020) paper \cite{BD20} relevant to this work are briefly discussed in Section~\ref{Kolmogorov}.

\section{Basic Definitions and Results}\label{basic}
\textcolor{black}{%
We apply the
theory of \emph{Turing Machines} \cite{T36} and \emph{recursive functions \cite{Kl36}} in order to investigate the zero-error capacity in view of computability. 
For brievity, we restrict ourselfes to an informal description and refer to \cite{Sc74, W00, So87, PoRi17} for a detailed treatment.
Turing Machines aim to yield a mathematical idealization of real-world
computation machines. Any algorithm that can be executed by a
real-world computer can, in theory, be simulated by a Turing machine, and
vice versa. In contrast to real-world computers, however, any restrictions regarding energy consumption, computation
time or memory size do not apply to Turing Machines. All computation steps on a Turing machine are furthermore assumed to be executed error-free.
Recursive functions form a special subset of the set \(\bigcup_{n=0}^{\infty} \big\{ f : \NN^{n} \hookrightarrow \NN \big\}\), where we use the symbol "\(\hookrightarrow\)" to denote a \emph{partial mapping}. Turing machines
and recursive functions are equivalent in the following sense:
a function \(f : \NN^{n} \hookrightarrow \NN\) is computable by a Turing machine if and only if it is a recursive function.
\begin{Definition}\label{ber}
A sequence of rational numbers \((r_n)_{n\in\NN}\) is said to be  \emph{computable} if there exist recursive functions \(f_{\mathrm{si}},f_{\mathrm{nu}},f_{\mathrm{de}}:\NN\to\NN\) 
such that
\begin{equation}
r_n= (-1)^{f_{\mathrm{si}}(n)}\frac {f_{\mathrm{nu}}(n)}{f_{\mathrm{de}}(n)}  
\end{equation}
holds true for all \(n\in\NN\). Likewise, a double sequence of rational numbers \((r_{n,m})_{n,m\in\NN}\) is said to be \emph{computable} if there exist recursive functions \(f_{\mathrm{si}},f_{\mathrm{nu}},f_{\mathrm{de}}:\NN\times\NN\to\NN\) 
such that
\begin{equation}
r_{n,m}= (-1)^{f_{\mathrm{si}}(n,m)}\frac {f_{\mathrm{nu}}(n,m)}{f_{\mathrm{de}}(n,m)}  
\end{equation}
holds true for all \(n,m\in\NN\)
\end{Definition}
\begin{Definition}\label{defeff}
A sequence $(x_n)_{n\in\NN}$  of real numbers is said to converge
\emph{effectively} towards a number $x_*\in\RR$ if there exists a recursive function $\kappa:\NN\to\NN$ such that $|x_*-x_n|<\sfrac{1}{2^N}$ holds true for all \(n,N\in\NN\) that satisfy \(n\geq \kappa(N)\).
\end{Definition}
\begin{Definition}\label{compreal}
A real number \(x\) is said to be \emph{computable} if there exists a computable sequence of rational numbers that converges effectively towards \(x\). 
\end{Definition}
We denote the set of computable real numbers by \(\RR_c\).}
{\color{black}    \begin{Definition}\label{def:CSCN}
        A sequence \((x_n)_{n\in\NN}\) of computable numbers is called \emph{computable} if there exists a computable double sequence \((r_{n,m})_{n,m\in\NN}\) of rational numbers as well as a recursive function \(\kappa : \NN \times \NN \rightarrow \NN\) such that 
        \begin{align}
            |x_n - r_{n,m}| < \frac{1}{2^M}
        \end{align}
        holds true for all \(n,m,M\in\NN\) that satisfy \(m \geq \kappa(n,M)\).
    \end{Definition}}

\begin{Definition}
 	A sequence of functions $\{F_n\}_{n\in\NN}$ with $F_n:\X\to \mathbb{R}_{c}$ is computable if the mapping $(i,x)\to F_i(x)$ is computable. 
 \end{Definition}
 
 \begin{Definition}
	A computable sequence of computable functions  $\{F_N\}_{N\in\NN}$ is called computably convergent to $F$ if there exists a partial recursive 
	function $\phi:\NN\times X\to \NN$, such that
	\[
	\left| F(x)-F_N(x)\right|<\frac 1{2^M}
	\]
	holds true for all $M\in\NN$, all $N\geq\phi(M,x)$ and all $x\in X$.
	\end{Definition}

In the following, we consider Turing machines with only one output state. We interpret this output status as the stopping of the Turing machine. This means that for an input \(x \in \mathbb{R}_{c} \), the Turing machine \(TM(x) \) ends its calculation after an unknown
     but finite number of arithmetic steps, or it computes forever.
    
    \begin{Definition}\label{semi}
        We call a set \(\mathcal{M} \subseteq \mathbb{R}_{c} \) semi-decidable
         if there is a Turing machine 
         \( TM_{\mathcal{M}} \) that stops for the input \( x \in \mathbb{R}_{c}\), if and only if \( x \in \mathcal{M} \) applies.
    \end{Definition}

Specker constructed in \cite{S49} a monotonically increasing computable sequence $\{r_n\}_{n\in\NN}$ of rational numbers that is bounded by 1, but the limit $x^*$, which naturally exists, is not a computable number. 
For all $M\in\NN$ there exists $n_0=n_0(M)$ such that for all $n\geq n_0$, $0\leq x-r_n <\frac 1{2^M}$ always 
holds, but the function $n_0:\NN\to\NN$ is not
partial recursive.
This means there are computable monotonically increasing sequences of rational numbers, which each converge to a finite limit value, but for which the limit values are not computable numbers and therefore the convergence is not effective.
Of course, the set of computable numbers is countable.

We will later examine the zero-error capacity $C_0(\cdot)$ as a function of computable DMCs. To do this, we need to define computable functions generally.

\begin{Definition}\label{Mazur}
    A function $f : \RR_c \to \RR_c$ is called Banach-Mazur computable if $f$ maps any given computable sequence
$\{x_n\}_{n=1}^\infty$ of computable numbers into a computable sequence
$\{f(x_n)\}_{n=1}^\infty$ of real numbers.
\end{Definition}
\begin{Definition}\label{Borel}
A function $f : \RR_c\to \RR_c$ is called Borel-Turing
computable if there is an algorithm that transforms each
given computable sequence of a computable real $x$ into a
corresponding representation for $f(x)$.
\end{Definition}

We note that Turing’s original definition of computability conforms
to the definition of Borel-Turing computability above. Banach-Mazur  computability (see Definition~\ref{Mazur})
is the weakest form of computability. For an
overview of the logical relations between different notions
of computability we again refer to \cite{AB14}.

Now we want to define the zero-error capacity. Therefore we need the definition
of a discrete memoryless channel. In the theory of transmission, the receiver must be in a position to
successfully decode all the messages transmitted by the sender.

Let $\X$ be a finite alphabet. We denote the set of probability distributions by $\P(\X)$.
We define the set of computable
probability distributions $\P_c(\X)$ as the set of all probability distributions $P\in\P(X)$ such that $P(x)\in \RR_c$ for all 
$x\in\X$.
Furthermore, for finite alphabets $\X$ and $\Y$, let $\CH$ be the set of all conditional probability
distributions (or channels) $P_{Y|X} : \X \to \P(\Y)$.
$\CHc$ denotes the set of all computable conditional probability
distributions, i.e., $P_{Y|X} (\cdot|x) \in \P_c(\Y)$ 
for every $x\in\X$.

 Let $M\subset\CHc$. We call $M$ semi-decidable (see Definition~\ref{semi}) if and
    only if there is a Turing machine $TM_M$ that either stops or computes forever, depending on whether $W\in M$ is true. That means $TM_M$ accepts exactly the elements of $M$ and calculates forever for an input $W\in M^c=\CHc\setminus M$.

\begin{Definition}
	A discrete memoryless channel (DMC) 
	is a triple $(\X,\Y,W)$, where $\X$ is the finite input alphabet, 
	$\Y$ is the finite output alphabet, and 
	$W(y|x)\in\CH$ with $x\in\X$, $y\in\Y$.
	The probability for a sequence $y^n\in\Y^n$ to be received if 
	$x^n\in\X^n$ was sent is defined by
	$$
	W^n(y^n|x^n)=\prod_{j=1}^n W(y_j|x_j).
	$$
\end{Definition}

\begin{Definition}
    A block code $\C$ with rate $R$ and block length $n$ consists of 
    \begin{itemize} 
    \item A message set $\M=\{ 1,2,...,M \}$ with $M=2^{nR}\in\NN$.
    \item An encoding function $e:\M\to \X^n$.
    \item A decoding function $d:\Y^n\to\M$.
    \end{itemize}
    We call such a code an $(R,n)$-code.
\end{Definition}


\begin{Definition}\mbox{}
\begin{enumerate}
    \item 
    The individual message probability of    error is defined by the conditional probability of error given that message $m$ is transmitted: \[
    P_{m}(\C)=Pr\{d(Y^n)\neq m|X^n=e(m)\}.
    \]
    \item We define the maximal probability of error by $P_{\max}(\C)=\max_{m\in\M} P_{m}(\C)$.
    
    \item A rate $R$ is said to be achievable if there exist a sequence of $(R,n)$-codes $\{\C_n\}$ with  probability  of  error $P_{\max}(\C_n)\to 0$ as $n\to \infty$.
    \end{enumerate}
\end{Definition}

Two sequences $x^n$ and $x'^n$ of size $n$ of input variables are distinguishable by a receiver if  the vectors $W^n(\cdot|x^n)$ and $W^n(\cdot|x'^n)$ are orthogonal. That means if $W^n(y^n|x^n)>0$ then $W^n(y^n|x'^n)=0$
and if $W^n(y^n|x'^n)>0$ then $W^n(y^n|x^n)=0$.
We denote by $M(W,n)$ the maximum cardinality of a set of mutually orthogonal vectors among the $W^n(\cdot|x^n)$ with $x^n\in\X^n$.

There are different ways to define the capacity of a channel. The so-called pessimistic  capacity is defined as $\liminf_{n\to\infty} \frac {\log_2 M(W,n)}n$
and the optimistic capacity is defined as $\limsup_{n\to\infty} \frac {\log_2 M(W,n)}n$. A discussion about these quantities can be found in \cite{A06}. 
We define
the zero-error capacity of $W$ as follows.
	\[
	C_0(W) = \liminf_{n\to\infty} \frac {\log_2 M(W,n)}n
	\]
For the zero-error capacity, the pessimistic capacity and the optimistic capacity are equal. 

First we want to introduce the representation of the zero-error capacity of Ahlswede. Therefore we need to introduce the arbitrarily varying channel (AVC). It was introduced 
under a different name by Blackwell, Breiman,  
and Thomasian \cite{BBT} and considerable progress has been 
made in the study of these channels. 
\begin{Definition}
    
Let $\X$ and $\Y$ be finite sets. A (discrete) arbitrarily
varying channel (AVC) is determined by a family of channels with common
input alphabet $\X$ and output alphabet $\Y$
\begin{equation}
{\mathcal W}=\bigl\{W(\cdot|\cdot,s)\in\CH :s\in\cal S\bigr\}.\label{avc1.1}
\end{equation}
The index $s$ is called state and the set $\cal S$ is called state set. 
Now an AVC is defined by a family of sequences of channels

\begin{equation}
W^n(y^n|x^n,s^n)=\prod_{t=1}^n
W(y_t|x_t,s_t),~x^n\in\X^n,~y^n\in\Y^n,~s^n\in{\cal S}^n \label{avc1.2}
\end{equation}
for \(n=1,2,\dots\).

\end{Definition}

\begin{Definition}
An $(n,M)$ code is a system $(u_i,\D_i)_{i=1}^M$ with $u_i\in\X^n$,
$\D_i\subset\Y^n$, and for $i\neq j$ $\D_i\cap\D_j=\varnothing$.
\end{Definition}
\begin{Definition}\mbox{}
\begin{enumerate}
\item The maximal probability of error of the code for an AVC $\W$ is \newline
$\lambda=\max\limits_{s^n\in{\cal S}^n}\max\limits_{1\leq i\leq M}
W^n(\D_i^c|u_i,s^n)$. 
\item The average probability of error of the code for an AVC $\W$
is\newline $\overline\lambda=\max\limits_{s^n\in{\cal S}^n}M^{-1}\sum\limits_{i=1}^M
W^n(\D_i^c|u_i,s^n)$.
\end{enumerate}
\end{Definition}

\begin{Definition}\mbox{}
\begin{enumerate}
    \item The capacity of an AVC with maximal
probability of error is the maximal number 
$C_{\max}({\mathcal W})$ such that for all
$\varepsilon,\lambda$ there exists an $(n,M)$ code of
the AVC $\W$ for all large $n$ with maximal probability smaller than
$\lambda$ and $\frac1n\log M>C_{\max}({\mathcal W})-\varepsilon$.
\item The capacity of an AVC $\W$ with average probability
of error is the maximal number 
$C_{av}({\mathcal W})$ such that for all
$\varepsilon$, $\overline\lambda>0$ there exists an $(n,M)$ code of
the AVC for all large $n$ with average probability smaller than
$\overline\lambda$ and $\frac1n\log M>C_{av}({\mathcal W})-\varepsilon)$.
\end{enumerate}    
\end{Definition}

In the following, denote \(AVC_{0-1}\) to be the set of AVCs \(\mathcal{W}\) that satisfy \(W(y|x,s) \in \{0,1\}\) for all \(y\in\Y\), all \(x\in\X\) and all \(s\in\S\).
\begin{Theorem}[Ahlswede \cite{A70}]\label{ahlswede}
Let $\X$ and $\Y$ be finite alphabets with $|\X|\geq 2$ and $|\Y|\geq 2$.
\begin{enumerate}
\item[{\rm (i)}]
For all DMC's $W^*\in\CH$ there exists ${\mathcal W}\in AVC_{0-1}$ such
that for the zero-error capacity of $W^*$ 
\begin{equation}
C_0(W^*)=C_{\max}({\mathcal W}).\label{avc1.9}
\end{equation}

\item[{\rm (ii)}]
Conversely, for each ${\mathcal W}\in AVC_{0-1}$ there exists a DMC $W^*\in\CH$
such that (\ref{avc1.9}) holds.
\end{enumerate}
\end{Theorem}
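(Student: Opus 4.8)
The plan is to reduce both quantities to the same purely combinatorial object---the largest family of pairwise ``non-overlapping'' input blocks---and then, in each direction, to exhibit a channel whose overlap structure is prescribed.

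First I would record a structural observation about $AVC_{0-1}$. Since $W(y|x,s)\in\{0,1\}$ and each $W(\cdot|x,s)$ is a probability distribution, every state $s\in\S$ induces a deterministic map $f_s:\X\to\Y$, where $f_s(x)$ is the unique $y$ with $W(y|x,s)=1$. Hence $W^n(\cdot|x^n,s^n)$ is a point mass at $(f_{s_1}(x_1),\dots,f_{s_n}(x_n))$, so for any $(n,M)$ code the per-codeword error $W^n(\D_i^c|u_i,s^n)$ takes only the values $0$ and $1$. Demanding maximal error $\lambda<1$ therefore forces $\lambda=0$, and $C_{\max}(\mathcal W)$ is genuinely a zero-error quantity. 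Setting $O(x)=\{f_s(x):s\in\S\}$, the reachable-output set of codeword $u_i$ is $A_i=\{f_{s^n}(u_i):s^n\in\S^n\}=\prod_t O(u_{i,t})$, and a code succeeds for every $s^n$ if and only if the $A_i$ are pairwise disjoint, i.e. for each pair $i\neq j$ there is a coordinate $t$ with $O(u_{i,t})\cap O(u_{j,t})=\varnothing$. Writing $N(\mathcal W,n)$ for the largest family of blocks pairwise distinguishable in this single-letter sense, the equivalence gives $C_{\max}(\mathcal W)=\liminf_n\frac1n\log_2 N(\mathcal W,n)$.

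Next I would observe that the zero-error capacity of a DMC has the identical form. For $W^*$ put $\supp(x)=\{y:W^*(y|x)>0\}$; then $W^{*n}(\cdot|x^n)$ has support $\prod_t\supp(x_t)$, so two blocks are orthogonal exactly when some coordinate satisfies $\supp(x_t)\cap\supp(x'_t)=\varnothing$. Thus $M(W^*,n)$ counts the largest family of blocks pairwise disjoint in the single-letter support sense, which is formally the same counting problem as for the AVC with $\supp(\cdot)$ in place of $O(\cdot)$. Both $M(W^*,n)$ and $N(\mathcal W,n)$ are supermultiplicative, so each defining $\liminf$ is in fact a limit, and it suffices to match the two counting problems at every block length $n$. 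With this reformulation the two constructions are immediate. For (i), given $W^*$, take $\S$ to be the set of all selection maps $s:\X\to\Y$ with $s(x)\in\supp(x)$ for every $x$ and set $f_s=s$; this is a valid element of $AVC_{0-1}$ with $O(x)=\supp(x)$, whence $N(\mathcal W,n)=M(W^*,n)$ for all $n$ and $C_{\max}(\mathcal W)=C_0(W^*)$. For (ii), given $\mathcal W$, define $W^*(y|x)=1/|O(x)|$ for $y\in O(x)$ and $0$ otherwise; since each $O(x)$ is nonempty this lies in $\CH$, and $\supp(x)=O(x)$, so the counting problems again coincide and $C_0(W^*)=C_{\max}(\mathcal W)$.

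The main obstacle is the first step: one must argue carefully that for $AVC_{0-1}$ the maximal-error criterion truly collapses to the combinatorial zero-error condition---that requiring $\lambda<1$ uniformly over all state sequences is equivalent to pairwise disjointness of the reachable-output sets $A_i$, and that this disjointness factorizes coordinatewise through the sets $O(\cdot)$. Once this equivalence is secured, the hypothesis $|\X|,|\Y|\geq2$ excludes degenerate alphabets, and the matching of $\supp(\cdot)$ with $O(\cdot)$ in both directions is routine.
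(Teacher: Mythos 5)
Your proposal is correct and follows essentially the same route as the paper, which reproduces Ahlswede's construction: for (i) the state set consisting of all $0$-$1$ matrices (equivalently, selection maps) supported inside $\supp W^*(\cdot|x)$, and for (ii) a full-support mixture of the states, with the key observation in both cases being that maximal error $\lambda<1$ for a $0$-$1$ AVC collapses to the zero-error condition determined by the support sets. Your write-up merely makes the combinatorial matching of $\supp(\cdot)$ with $O(\cdot)$ more explicit than the paper's sketch.
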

The construction is interesting. Therefore we cite it from \cite{AADT19}:
\begin{enumerate}
\item[(i)] 
For given $W^*$, we let ${\mathcal W}$ be the set of stochastic matrices with
index (state) set ${\cal S}$ such that for all $x\in\X$, $y\in\Y$ and
$s\in{\cal S}$ $W(y|x,s)=1$ implies that $W^*(y|x)>0$. Then for all $n$,
$x^n\in\X^n$ and $y^n\in\Y^n$, $W^{*n}(y^n|x^n)>0$ if and only if there exists an
$s^n\in{\cal S}^n$ such that
\begin{equation}
W^n(y^n|x^n,s^n)=1.\label{avc1.10}
\end{equation}

Notice that for all $\lambda<1$, a code for ${\mathcal W}$ with maximal
probability of error $\lambda$ is a zero-error code for ${\mathcal W}$. Thus it
follows from (\ref{avc1.10}) that a code is a zero-error code for $W^*$
if and only if it is a code for ${\mathcal W}$ with maximal probability of error
$\lambda<1$. 

\item[(ii)]
For a given 0-1 type AVC ${\mathcal W}$ (with state set ${\cal S}$) and any
probability $\pi\in\P({\cal S})$ with $\pi(s)>\sigma$ for all $s$, let
$W^*=\sum\limits_{s\in{\cal S}}\pi(s)W(\cdot|\cdot,s)$. Then (\ref{avc1.10}) holds.
\end{enumerate}

The zero-error capacity can be characterized in graph-theoretic terms as well. Let $W\in\CH$ be given and $|\X|=q$.
	Shannon \cite{S56} introduced the confusability graph $G_W$ with $q=|G|$. In this graph, two letters/vertices $x$ and $x'$ are connected, if one could be confused
	with the other due to the channel noise (i.e. there does exist a $y$ such that $W(y|x)>0$ and $W(y|x')>0$).
	Therefore, the maximum independent set is the maximum number of single-letter messages which can be sent without danger of confusion.  
	In other words, the receiver knows whether the received message is correct or not.
		It follows that $\alpha(G)$ is the maximum number of messages which can be sent without danger of confusion.  
		Furthermore, the definition is extended to words of length $n$ by
		$\alpha(G^{\boxtimes n})$. Therefore, we can give the following graph-theoretic definition of the Shannon capacity.
\begin{Definition} 
	The Shannon capacity of a graph $G\in\G$ is defined by
	\[
	\Theta(G) \coloneqq \limsup_{n\to\infty} \alpha(G^{\boxtimes n})^{\frac 1n}.
	\]
\end{Definition}
Shannon discovered the following.
\begin{Theorem}[Shannon \cite{S56}]\label{Shannon}
Let $(\X,\Y,W)$ be a DMC. Then 
 	\[
 	2^{C_0(W)}=\Theta(G_W)=\lim_{n \to \infty} \alpha(G_W^{\boxtimes n})^{\frac 1n}.
	\]
 \end{Theorem}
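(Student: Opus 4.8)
The plan is to convert the analytic quantity $C_0(W)$ into a purely combinatorial one by matching $M(W,n)$ with the independence number of the $n$-th strong power of the confusability graph, and then to upgrade the $\liminf$ that defines $C_0$ to a genuine limit via Fekete's lemma. The heart of the argument is the identity
\[
M(W,n) = \alpha\bigl(G_W^{\boxtimes n}\bigr) \qquad \text{for every } n\in\NN.
\]
To establish it, I would start from the factorization of supports: $\supp W^n(\cdot|x^n) = \prod_{j=1}^{n} \supp W(\cdot|x_j)$. Hence two sequences $x^n, x'^n \in \X^n$ yield non-orthogonal output vectors exactly when the product support $\prod_{j=1}^n \bigl(\supp W(\cdot|x_j) \cap \supp W(\cdot|x'_j)\bigr)$ is nonempty, i.e. when $\supp W(\cdot|x_j) \cap \supp W(\cdot|x'_j) \neq \varnothing$ in \emph{every} coordinate $j$. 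By the definition of $G_W$, this coordinatewise condition says that in each $j$ either $x_j = x'_j$ or $x_j$ is adjacent to $x'_j$, which is precisely the adjacency relation of $G_W^{\boxtimes n}$. Therefore $W^n(\cdot|x^n) \perp W^n(\cdot|x'^n)$ if and only if $x^n$ and $x'^n$ are distinct and non-adjacent in $G_W^{\boxtimes n}$, so a set $S \subseteq \X^n$ gives pairwise orthogonal output vectors if and only if $S$ is an independent set of $G_W^{\boxtimes n}$; taking maxima gives the identity.

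Next I would prove that $\lim_{n\to\infty} \alpha(G_W^{\boxtimes n})^{1/n}$ exists and that the $\liminf$ in the definition of $C_0(W)$ equals it. This rests on supermultiplicativity of the independence number under the strong product: if $I$ is independent in $G_W^{\boxtimes m}$ and $J$ is independent in $G_W^{\boxtimes n}$, then the concatenations $I \times J$ form an independent set in $G_W^{\boxtimes(m+n)}$, since two distinct concatenated sequences already differ, and hence are distinguishable, within one of the two blocks. Thus $\alpha(G_W^{\boxtimes(m+n)}) \geq \alpha(G_W^{\boxtimes m})\cdot\alpha(G_W^{\boxtimes n})$, so $n \mapsto \log_2 \alpha(G_W^{\boxtimes n})$ is superadditive, and Fekete's lemma gives
\[
\lim_{n\to\infty} \frac{\log_2 \alpha(G_W^{\boxtimes n})}{n} = \sup_{n} \frac{\log_2 \alpha(G_W^{\boxtimes n})}{n},
\]
so in particular $\liminf$ and $\limsup$ coincide (which also explains why the pessimistic and optimistic zero-error capacities agree). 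Exponentiating and inserting the first step yields $2^{C_0(W)} = 2^{\lim_n \log_2 \alpha(G_W^{\boxtimes n})/n} = \lim_n \alpha(G_W^{\boxtimes n})^{1/n} = \Theta(G_W)$, as claimed.

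The main obstacle is the first step: one must verify carefully that orthogonality of the product output distributions is genuinely equivalent to the strong-product adjacency condition, rather than merely one implying the other. The crux is the elementary but essential fact that two product sets $\prod_j A_j$ and $\prod_j B_j$ are disjoint precisely when $A_j \cap B_j = \varnothing$ for at least one coordinate $j$, and that this coordinatewise disjointness is exactly distinguishability of $x_j$ and $x'_j$ in $G_W$. Once this bookkeeping is in place the identity $M(W,n) = \alpha(G_W^{\boxtimes n})$ follows, and the remaining Fekete argument is routine.
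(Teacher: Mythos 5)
Your proposal is correct: the identification $M(W,n)=\alpha(G_W^{\boxtimes n})$ via the factorization of supports, followed by supermultiplicativity of $\alpha$ under the strong product and Fekete's lemma, is exactly the standard (Shannon) argument, and each step you give checks out. Note that the paper itself does not prove this theorem --- it cites it from Shannon \cite{S56} and only remarks afterwards that the limit exists and equals $\sup_{n}\alpha(G_W^{\boxtimes n})^{1/n}$ by Fekete's lemma, which is precisely the second half of your argument --- so your write-up supplies a correct proof consistent with the one step the paper does sketch.
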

 
		This limit exists and equals the supremum \[
		\Theta(G_W)=\sup_{n\in\NN} \alpha(G_W^{\boxtimes n})^{\frac 1n}\] by Fekete's lemma.

Observe that Theorem~\ref{Shannon} yields no further information on whether $C_0(W)$ and $\Theta(G)$ are computable real numbers.

 \section{Algorithmic Computability of the Zero-Error Capacity}\label{results}
In this section we investigate the algorithmic computability of zero-error capacity of DMCs, as there has been no known closed formula for $C_0(W)$ until now.
Furthermore, we analyze the algorithmic relationship of Shannon's characterization of the zero-error capacity of DMCs with Ahlswede's characterization. We show that the zero-error capacity cannot be computed
algorithmically. First we
 consider whether $C_0(\cdot)$ is computable as a function
 of the noisy channel. We need the following lemmas.

\begin{Lemma}
        \label{lemmaTM}
        \mbox{}\begin{itemize}
            \item There is a Turing machine \(TM_{>0} \) that stops for \(x \in \mathbb{R}_{\mathrm{c}}\), if and only if \(x> 0 \) applies. Hence the set \(\mathbb{R}_{\mathrm{c}} ^{+}: = \big \{x \in \mathbb{R}_{\mathrm{c}}: x> 0 \big \} \) is semi-decidable.
            \item There is a Turing machine \(TM_{<0} \), that stops for \(x \in \mathbb{R}_{\mathrm{c}} \), if and only if \(x <0 \) applies. Hence the set \(\mathbb{R}_{\mathrm{c}} ^{-}: = \big \{x \in \mathbb{R}_{\mathrm{c}}: x <0 \big \} \) is semi-decidable.
            \item There is \emph{no} Turing machine \(TM_{=0} \) that stops for \(x \in \mathbb{R}_{\mathrm{c}} \), if and only if \(x = 0 \) applies.
        \end{itemize}
    \end{Lemma}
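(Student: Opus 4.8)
The plan is to settle the two semi-decidability claims by an explicit search algorithm, and the non-existence claim by a reduction from the halting problem.

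For $TM_{>0}$, recall that an input $x \in \RR_c$ is presented as a computable sequence of rationals $(r_n)_{n\in\NN}$ together with a recursive modulus $\kappa:\NN\to\NN$ satisfying $|x-r_n|<2^{-N}$ whenever $n\geq\kappa(N)$ (Definitions~\ref{ber}--\ref{compreal}). I would let $TM_{>0}$ loop over $N=1,2,3,\dots$, at each stage compute the rational $r_{\kappa(N)}$ and test the decidable rational inequality $r_{\kappa(N)}>2^{-N}$, halting as soon as it holds. For correctness: if the test succeeds at some $N$, then $x>r_{\kappa(N)}-2^{-N}>0$; conversely, if $x>0$ choose $N$ with $x>2^{-(N-1)}$, whence $r_{\kappa(N)}>x-2^{-N}>2^{-(N-1)}-2^{-N}=2^{-N}$, so the machine halts. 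If $x\leq 0$ then $r_{\kappa(N)}<x+2^{-N}\leq 2^{-N}$ for every $N$, so the test never succeeds and the machine runs forever. Thus $TM_{>0}$ halts exactly when $x>0$; $TM_{<0}$ is obtained symmetrically by testing $r_{\kappa(N)}<-2^{-N}$.

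For the third claim I would argue by contradiction, reducing the complement of the halting problem to the set $\{0\}$. Fix an enumeration of Turing machines and, for the machine of index $e$, define a rational sequence
\[
r_n^{(e)} = \begin{cases} 2^{-k} & \text{if machine } e \text{ halts within } n \text{ steps, with } k\leq n \text{ the halting step,}\\ 0 & \text{otherwise.}\end{cases}
\]
This sequence is computable uniformly in $e$ (simulate $e$ for $n$ steps), is nondecreasing, and has limit $x_e=2^{-k}>0$ if machine $e$ halts at step $k$, and $x_e=0$ if it never halts. Crucially, the convergence is effective with the \emph{uniform} modulus $\kappa(N)=N$: for $n\geq N$ one has $|x_e-r_n^{(e)}|=0$ once the halt has been detected, and otherwise $|x_e-r_n^{(e)}|=2^{-k}\leq 2^{-(n+1)}<2^{-N}$. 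Hence each $x_e\in\RR_c$, and from $e$ one can effectively produce a representation of $x_e$, with $x_e=0$ if and only if machine $e$ does not halt. Now if $TM_{=0}$ existed, feeding it the representation of $x_e$ would yield a machine that halts on input $e$ precisely when $x_e=0$, i.e.\ precisely when machine $e$ does not halt; this would semi-decide the set of non-halting machines, which is the complement of the halting set and is not recursively enumerable — a contradiction.

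I expect the third part to be the main obstacle. The subtlety is not the diagonalization itself but producing a \emph{uniform, effectively convergent} construction $e\mapsto x_e$ with a provable recursive modulus: the choice $\kappa(N)=N$ above is exactly what certifies that each $x_e$ is a genuine element of $\RR_c$ (and not merely a noncomputable limit in the spirit of Specker's example). The other delicate point is invoking unsolvability in the correct direction, reducing from the non-r.e.\ complement of the halting problem rather than from the halting set itself, since the latter \emph{is} recursively enumerable and would yield no contradiction.
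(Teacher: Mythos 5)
Your proof is correct and takes essentially the same approach as the paper: the first two items are handled by effectively extracting rational lower/upper approximations from the representation of $x$ and testing a decidable rational inequality at each stage, and the third item is the same reduction the paper uses, building from each machine index an effectively convergent sequence (with explicit modulus) whose limit is $0$ exactly when the machine fails to halt, so that $TM_{=0}$ would semi-decide the non-r.e.\ complement of the halting set. Your explicit remark that the modulus $\kappa(N)=N$ is what distinguishes this construction from a Specker sequence, and that the reduction must go through the complement of the halting set, matches the paper's reasoning precisely.
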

    
\begin{proof} 
    Let \(x \in \mathbb{R}_c \) be given by the quadruple \(\big (a, b, s, \zeta \big) \), with
    \(
        v_k: = (-1) ^{s (k)} \big (\sfrac{a (k)}{b (k)} \big).
    \)
    Then \(\wt{a}_1, \wt{a}_2, \wt{a}_3, \ldots \)  with
    \(
        \wt{a}_k: = \max \big \{v_{\zeta (l)} - \sfrac{1}{2 ^ l}: 1 \leq l \leq k \big \}
    \) is a computable monotonic increasing sequence and converge to \(x \). The Turing machine \(TM_{>0} \) sequentially computes the sequence \(\wt{a}_1, \wt{a}_2, \wt{a}_3, \ldots \). Obviously there is a \(k \in \NN \) with
    \(\wt{a}_k> 0 \), if and only if \(x> 0 \). Since \(\wt{a}_k \) is always a rational number, \(TM_{>0} \) can directly check algorithmically whether \(\wt{a}_k> 0 \) applies. We set
    \begin{align}
        TM_{>0} (x): = \begin{cases}
                        \mathrm{STOP}  \quad \text{if it finds } k_0 \in \NN \text{ with }  \wt{a}_{k_0}> 0, \\
                        \text{The Turing machine computes forever.}
                    \end{cases}
    \end{align}
    Then \(TM_{>0} (x) = \mathrm{STOP} \) applies if and only if \(x> 0 \) applies.
    
    The construction of \(TM_{<0} \) is analogous with the computable sequence \(\wt{b}_1, \wt{b}_2, \wt{b}_3, \ldots \) where
    \(
        \wt{b}_k: = \min \big \{v_{\zeta (l)}+ \sfrac{1}{2 ^ l}: 1 \leq l \leq k \big \},
    \)
    which converges monotonically to \(x \). Accordingly,
    \begin{align}
        TM_{<0} (x): = \begin{cases}
                        \mathrm{STOP}  \quad \text{if it finds } k_0 \text{ with } \in \NN \wt{b}_k <0 ~ , \\
                        \text{The Turing machine computes forever.}
                    \end{cases}
    \end{align}
    
We now want to prove the last statement of this lemma. We provide the proof indirectly. Assuming that the corresponding Turing machine $TM_{=0}$ exists.
Let $ n \in \NN $ be arbitrary. We consider an arbitrary Turing machine $TM$ and the computable sequence $ \{\lambda_m \}_{m \in \NN} $ of computable numbers:
\begin{equation*}
    \lambda_m: =
    \begin{cases} \frac{1}{2^l} & TM ~ \text{stops for input $ n $ after $ l \leq m $ steps;} \\
    \frac{1}{2 ^ m} & TM ~ \text{does not stop entering $ n $ after $ l \leq m $ steps.}
    \end{cases}
\end{equation*}
Obviously, for all $ m \in \NN$ it holds \(\lambda_m \leq \lambda_{m+ 1} \) and \(\lim_{m \to \infty} \lambda_m =: x \geq 0 \), where \(\lim_{m \to \infty} \lambda_m = 0 \) if and only if \(TM \) for input \(n \) does not stop in a finite number of steps. For all \(m, N \in \NN \) with \(N \leq m \) also applies
\begin{align}
    \big | \lambda_m - x \big | \leq \frac{1}{2 ^ N},
\end{align}
as we will show by considering the following cases:
\begin{itemize}
    \item Assume that \(TM \) stops for the input $ n $ after $ l \leq N $ steps. For all \(m \geq N \), then \(\lambda_m = \lambda_N \) applies, and thus \(\big | \lambda_m - x \big | = 0 \);
    \item Assume that \(TM \) does not stop for the input $ n $ after $ l \leq N $ steps. For all \(m \geq N \), then \(\sfrac{1}{2 ^ N} = \lambda_N \geq \lambda_m \) and thus 
    \(\big | \lambda_m - x \big | \leq \big | \sfrac{1}{2 ^ N} - x \big | \leq \big | \sfrac{1}{2 ^ N} - 0 \big | = \sfrac{1}{2 ^ N} \).
\end{itemize}

Hence we can use the pair \((TM, n) \mapsto \big ((\lambda_m)_{m \in \NN}, \eta \big) \) with the estimate \(\eta: \NN \rightarrow \NN\)
as a computable real number, which we can pass to a potential Turing machine \(TM_{=0} \) as input.
The partial recursive function $\eta$ is a representation of the computable number $x$, i.e. $x\sim \big(({\lambda_m})_{m\in\NN},\eta\big)$.
Accordingly, \(TM_{=0} \) stops for the input \(x\) if and only if
\(TM \) for input \(n \) does not stop in a finite number of steps. Thus for every Turing machine \(TM_{=0} \) solves 
for every input $n$ the \emph{halting problem}. The halting problem cannot be solved by a Turing machine (\cite{T36}). This proves the lemma. \end{proof}

    In the following lemma we give an example of a function that is not Banach-Mazur computable.

\begin{Lemma} \label{Lemma_f1}
Let \(x \in [0, \infty) \cap \mathbb{R}_c \) be arbitrary. 
We consider the following function:
\begin{equation}
    f_1 (x): = \begin{cases} 1, & x> 0,\ x \in \mathbb{R}_{\mathrm{c}} \\
    0, & x = 0.
    \end{cases}
\end{equation}
The function $ f_1 $ is not Banach-Mazur computable.
\end{Lemma}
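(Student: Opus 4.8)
The plan is to argue by contradiction, reducing the \emph{halting problem} to the computability of the image sequence and reusing the sequence construction from the proof of Lemma~\ref{lemmaTM}. The content of Banach--Mazur computability (Definition~\ref{Mazur}) is merely that $f_1$ must send computable sequences to computable sequences; to refute it I need only exhibit one computable sequence $\{x_n\}_{n\in\NN}$ of computable numbers in $[0,\infty)\cap\RR_c$ whose image $\{f_1(x_n)\}_{n\in\NN}$ fails to be a computable sequence in the sense of Definition~\ref{def:CSCN}.

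First I would fix an effective enumeration $TM_1, TM_2, \ldots$ of all Turing machines and, for each $n$, run the construction from the third bullet of Lemma~\ref{lemmaTM} on the pair $(TM_n, n)$. This yields a computable double sequence of rationals $(\lambda_{n,m})_{n,m\in\NN}$ with
\[
\lambda_{n,m} = \begin{cases} \sfrac{1}{2^l} & TM_n \text{ halts on input } n \text{ after } l\leq m \text{ steps,}\\ \sfrac{1}{2^m} & \text{otherwise,} \end{cases}
\]
together with the uniform convergence modulus already established there, so that $x_n := \lim_{m\to\infty}\lambda_{n,m}$ defines a computable sequence $\{x_n\}_{n\in\NN}$ of computable numbers. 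By the case analysis carried out in Lemma~\ref{lemmaTM}, $x_n \geq 0$ always, with $x_n > 0$ precisely when $TM_n$ halts on input $n$ and $x_n = 0$ precisely when it does not.

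Next I would evaluate $f_1$ along this sequence. By definition $f_1(x_n) = 1$ when $TM_n$ halts on input $n$ and $f_1(x_n) = 0$ otherwise; in either case $f_1(x_n)\in\{0,1\}\subset\RR_c$, so the image is a legitimate sequence of computable numbers. Assume for contradiction that $f_1$ is Banach--Mazur computable. Then $\{f_1(x_n)\}_{n\in\NN}$ is a computable sequence, so by Definition~\ref{def:CSCN} there are a computable double sequence of rationals $(r_{n,m})$ and a recursive $\kappa$ with $|f_1(x_n) - r_{n,m}| < \sfrac{1}{2^M}$ whenever $m\geq\kappa(n,M)$. Taking $M=2$ and computing the single rational $r_{n,\kappa(n,2)}$, I could decide, by testing whether it exceeds $\sfrac12$, whether $f_1(x_n)=1$ or $f_1(x_n)=0$, and hence whether $TM_n$ halts on input $n$. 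This would be a recursive solution of the halting problem, which is impossible, so $f_1$ cannot be Banach--Mazur computable.

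The step I expect to require the most care is verifying that $\{x_n\}$ really is a computable sequence of computable numbers \emph{uniformly in} $n$, i.e.\ that the modulus furnished by Lemma~\ref{lemmaTM} can be taken as a single recursive function of $(n,N)$ rather than chosen separately for each $n$; here this is essentially immediate because that modulus was $N \mapsto N$ independently of the simulated machine, but it must be checked so that Definition~\ref{def:CSCN} genuinely applies. Once this uniformity is in place, extracting a halting decision from a $\{0,1\}$-valued computable image sequence is routine, since any rational approximation to within $\sfrac14$ of a value in $\{0,1\}$ determines that value.
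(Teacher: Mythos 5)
Your proposal is correct and follows essentially the same route as the paper: both build the computable double sequence $\lambda_{n,m}$ from the step-by-step simulation of a machine, obtain a computable sequence of computable numbers that is positive exactly on a recursively enumerable non-recursive set (you use the diagonal halting set where the paper uses an arbitrary such set $A$), and then derive a decision procedure for that set from the assumed computability of the $\{0,1\}$-valued image sequence. Your added care about the uniformity of the convergence modulus and the extraction of the bit from a $\sfrac14$-approximation only makes explicit what the paper leaves implicit.
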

\begin{proof}
It holds \(\forall x \in [0, \infty) \cap \mathbb{R}_c: f (x) \in \mathbb{R}_c \).
We assume that $ f_1 $ is Banach-Mazur computable. 
Let $\{x_n\}_{n\in\NN}$ be an arbitrary computable sequence of computable numbers with 
$\{x_n\}_{n\in\NN}\subset [0,n)$.

The sequence \((f (x_n))_{n \in \NN} \) is a computable sequence of computable numbers.
We take a set $ A \subset \NN $ that is recursively enumerable but not recursive. Then let $ TM_A $ be a Turing machine that stops for input $ n $ if and only if $ n \in A $ holds. $ TM_A $ accepts exactly the elements from $ A $. Let $ n \in \NN $ be arbitrary. We now define
\begin{equation}
  \lambda_{n, m}: = \begin{cases}
  \frac{1}{2 ^ l}, & \text{$ TM_A $ stops for input $ n $ after $ l \leq m $ steps;} \\
  \frac{1}{2 ^ m}, & \text{$ TM_A $ does not stop after $ l \leq m $ steps for input $ n $.}
  \end{cases}
\end{equation}
Then $ (\lambda_{n, m})_{m, n \in \NN ^ 2} $ is a computable (double) sequence of computabel numbers.
For $ n \in \NN $ any and all $ m \geq M$ and $M \in \NN $ implies:
\begin{equation*}
    \big | \lambda_{n, m} - \lambda_{n, M} \big | <\frac{1}{2 ^ M}.
\end{equation*}
This means that there is effective convergence for every $ n \in \NN $. Consequently by Lemma~\ref{lemmaTM}, for every $ n \in \NN $ there is a $ \lambda ^*_ n \in \mathbb{R}_{\mathrm{c}} $ with $ \lim_{m \to \infty} | \lambda ^*_n- \lambda_{n, m} | = 0 $ and the sequence $ (\lambda ^*_ n)_{n \in \NN} $ is a computable sequence of computable numbers. This means that $ \big (f_1 (\lambda ^*_ n) \big)_{n \in \NN} $ is a computable sequence of computable numbers, where
\begin{equation}
    f_1 (\lambda_n ^*) = \begin{cases} 1 & \quad \text{if} ~ \lambda_n ^*> 0, \\
    0 & \quad \text{if} ~ \lambda_n ^* = 0
    \end{cases}
\end{equation}
applies.
The following Turing machine $ TM_* \colon \NN \longrightarrow \{\text{yes, no} \} $ exists:
$ TM_* $ computes the value $ f_1 (\lambda_n ^*) $ for input $ n $. If $ f_1 (\lambda_n ^*) = 1 $, then set $ TM_* (n) = \text{yes} $, i.e., $ n \in A $. If $ f_1 (\lambda_n ^*) = 0 $, then set $ TM_* (n) = \text{no} $, i.e., $ n \notin A $. This applies to every $ n \in \NN $ and therefore $ A $ is recursive, which contradicts the assumption. This means that $ f_1 $ is not Banach-Mazur computable.
\end{proof}

 \begin{Theorem}\label{compute}
Let $\X,\Y$ be finite alphabets with $|\X|\geq 2$ and $|\Y|\geq 2$. Then 
$C_0:\CHc\to \RR$ is not Banach-Mazur computable.
 \end{Theorem}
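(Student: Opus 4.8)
The plan is to reduce the non-Banach--Mazur-computability of $f_1$ (Lemma~\ref{Lemma_f1}) to that of $C_0$ by exhibiting a \emph{computable} family of binary channels whose zero-error capacity tracks the sign of a real parameter. Concretely, I would single out two input letters $a_0,a_1$ and two output letters $b_0,b_1$, and for a parameter $x \in [0,\infty)\cap\RR_c$ set $\tilde x := \min\{x,1\}$ (a computable function of $x$), defining $W_x \in \CHc$ so that input $a_0$ produces $b_0$ deterministically while input $a_1$ produces $b_0$ with probability $\tilde x$ and $b_1$ with probability $1-\tilde x$. Every matrix entry is then one of $1,\,0,\,\tilde x,\,1-\tilde x$, so the assignment $x \mapsto W_x$ sends any computable sequence of computable reals to a computable sequence in $\CHc$; that is, the reduction map is itself computable.

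The second step is to evaluate $C_0(W_x)$ through Shannon's characterization $2^{C_0(W)} = \Theta(G_W)$ (Theorem~\ref{Shannon}). At $x=0$ the two inputs have the disjoint output supports $\{b_0\}$ and $\{b_1\}$, so they are non-confusable, $G_{W_0} = \overline{K_2}$, and $\alpha(G_{W_0}^{\boxtimes n}) = 2^n$, whence $\Theta = 2$ and $C_0(W_0) = 1$. As soon as $x>0$ the input $a_1$ also hits $b_0$ with positive probability, the two letters become confusable, $G_{W_x} = K_2$, and $\Theta(K_2)=1$, so $C_0(W_x)=0$. Thus $C_0(W_x) = 1 - f_1(x)$ on all of $[0,\infty)\cap\RR_c$. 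For general alphabets with $|\X|\geq 2$ or $|\Y|\geq 2$ I would leave the surplus output letters unused and give every surplus input letter full support on $\{b_0,b_1\}$, so that it is confusable with all others regardless of $x$; the confusability graph is then $K_k$ for $x>0$ and $K_k$ with the single edge $a_0a_1$ deleted for $x=0$. Both graphs are perfect, so $\Theta = \alpha \in \{1,2\}$ and the identity $C_0(W_x) = 1 - f_1(x)$ persists.

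Finally, I would close by composition: if $C_0$ were Banach--Mazur computable, then for every computable sequence $\{x_n\}$ of computable reals the sequence $\{W_{x_n}\}$ is computable by step one, hence $\{C_0(W_{x_n})\} = \{1 - f_1(x_n)\}$ is a computable sequence of reals, and therefore so is $\{f_1(x_n)\}$, contradicting Lemma~\ref{Lemma_f1}. The point I expect to require the most care is the \emph{direction} of the encoding: any computable, and hence in particular support-monotone, parametrization can only make the two output supports \emph{overlap} as $x$ leaves $0$, never render them disjoint, so one is forced to realize $1-f_1$ rather than $f_1$ directly. This is no accident but a shadow of Lemma~\ref{lemmaTM}: a computable family that made the supports disjoint exactly when $x>0$ would amount to the forbidden machine $TM_{=0}$. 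Since $1-f_1$ fails to be computable precisely when $f_1$ does, targeting it costs nothing.
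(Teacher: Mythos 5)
Your proposal is correct and takes essentially the same route as the paper: both reduce to the non-Banach--Mazur computability of the jump function of Lemma~\ref{Lemma_f1} via a computably parametrized family of binary channels whose zero-error capacity drops from $1$ at the parameter value $0$ to $0$ for positive parameters (the paper uses the binary symmetric channel $W_\delta$ on $[0,\tfrac12)$, you use a Z-channel with clipping). Your version is in fact slightly more complete, since you explicitly verify computability of the reduction map, cite the correct lemma, and treat general alphabets $|\X|,|\Y|\geq 2$, which the paper's proof leaves implicit.
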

 \begin{proof} Let $|\X|=|\Y|=2$ then
we will show that $C_0:\CHc\to \RR$ is not Banach-Mazur
computable. For $0\leq \delta<\frac 12$ we choose
$W_\delta(y|1)= \binom{1-\delta}{\delta}$ and 
$W_\delta(y|0)= \binom{\delta}{1-\delta}$. Then we have
\[
C_0(W_\delta)=\begin{cases}1 & ,{\rm if}\ \delta=0\\0 & ,{\rm if}\ 0<\delta< \frac 1{2}.\end{cases}
\]
We consider the function $\xi:[0,\frac 12)\to\{0,1\}$ with $\xi(\delta)=C_0(W_\delta)$. 
It follows from Lemma~\ref{lemmaTM} that $\xi$ is not Banach-Mazur computable.
\end{proof}

Therefore, the zero-error capacity cannot be computed algorithmically.

\begin{Remark}
There are still some questions that we would like to discuss.
\begin{enumerate}
    \item It is not clear whether $C_0(W)\in\RR_c$ applies to every channel $W\in\CHc$.
    \item In addition, it is not clear whether $\Theta$ is Borel-Turing computable. Theorem~\ref{compute} shows that this does not apply to the zero-error capacity for DMCs. We show that $C_0$ is not even Banach-Mazur computable.
\end{enumerate}

\end{Remark}

In the following, we want to investigate the semi-decidability of the set \linebreak 
$\{W\in\CHc: C_0(W)>\lambda\}$.

 \begin{Theorem}\label{decide}
 Let $\X,\Y$ be finite alphabets with $|\X|\geq 2$ and $|\Y|\geq 2$. For all
 $\lambda\in\RR_c$ with $0\leq \lambda<\log_2\min\{|\X|,|\Y|\}$, the sets
 $\{W\in\CHc: C_0(W)>\lambda\}$ are not semi-decidable
 \end{Theorem}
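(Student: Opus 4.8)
The plan is to reduce the complement of the halting problem to membership in the set $M := \{W \in \CHc : C_0(W) > \lambda\}$ (which is exactly the notion of a semi-decidable subset of $\CHc$ introduced in the paper), using the computable-real gadget from the proof of Lemma~\ref{lemmaTM} to drive a channel family whose zero-error capacity jumps across the threshold $\lambda$. Fix the computable value $\lambda$ and write $d := \min\{|\X|,|\Y|\} \geq 2$. Since $0 \leq \lambda < \log_2 d$, the integer $k := \lfloor 2^\lambda \rfloor$ is well defined and satisfies $1 \leq k \leq d-1$ together with $\log_2 k \leq \lambda < \log_2(k+1)$. These constants depend only on the fixed $\lambda$ and may therefore be hard-coded into the machines below; in particular I never need to decide whether $2^\lambda$ is an integer.

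Next I would construct a computable family $\{V_\delta\}_{0 \leq \delta \leq 1/4} \subseteq \CHc$ extending the binary family of Theorem~\ref{compute} to $k+1 \leq d$ active letters. Letters $1,\dots,k-1$ are transmitted noiselessly onto distinct outputs, the pair $\{k,k+1\}$ carries the binary gadget $V_\delta(\cdot\mid k) = (1-\delta,\delta)$ and $V_\delta(\cdot\mid k+1) = (\delta,1-\delta)$ on two further outputs, and every remaining (dummy) letter is sent to the point mass on output~$1$. The confusability graph $G_{V_\delta}$ is then a disjoint union of cliques in both regimes: for $\delta = 0$ the pair $\{k,k+1\}$ splits into two distinguishable letters and $G_{V_\delta}$ has exactly $k+1$ cliques, whereas for $\delta > 0$ the pair becomes mutually confusable and $G_{V_\delta}$ has exactly $k$ cliques. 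For disjoint unions of cliques the independence number equals the clique–cover number, so $\alpha$ and $\Theta$ coincide; hence by Theorem~\ref{Shannon},
\[
C_0(V_\delta) = \log_2 \Theta(G_{V_\delta}) =
\begin{cases}
\log_2(k+1) > \lambda, & \delta = 0,\\
\log_2 k \leq \lambda, & 0 < \delta \leq 1/4.
\end{cases}
\]
In particular $V_\delta \in M$ if and only if $\delta = 0$, and the dummy letters merely enlarge the clique containing letter~$1$, leaving these two capacity values unchanged.

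Finally I would run the reduction. Given a Turing machine $TM$ and an input $n$, I form, exactly as in the last part of the proof of Lemma~\ref{lemmaTM} but with the sequence rescaled by $\tfrac12$, a computable real $\delta(TM,n) \in [0,\tfrac14]$ together with an effective representation of it, such that $\delta(TM,n) = 0$ precisely when $TM$ does not halt on $n$. Since the channel entries of $V_\delta$ are $1-\delta$, $\delta$ and fixed rationals, the assignment $(TM,n) \mapsto V_{\delta(TM,n)}$ is an effective map into $\CHc$. Combining the equivalences gives
\[
V_{\delta(TM,n)} \in M \iff \delta(TM,n) = 0 \iff TM \text{ does not halt on } n.
\]
If $M$ were semi-decidable, say via a Turing machine $TM_M$, then running $TM_M$ on $V_{\delta(TM,n)}$ would halt exactly on the non-halting instances $(TM,n)$, making the non-halting set recursively enumerable. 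Since that set is not recursively enumerable, $M$ cannot be semi-decidable.

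The delicate step is the construction of the family $\{V_\delta\}$: one must pin the capacities to the exact values $\log_2(k+1)$ and $\log_2 k$ — not merely bound them — so that the dichotomy falls cleanly on the two sides of the arbitrary computable threshold $\lambda$, and one must verify that the dummy letters needed to fill out the alphabets $\X,\Y$ do not inflate $\Theta$. I expect no difficulty from computability of $\Theta$ itself, which is never invoked; what matters is only the equivalence $C_0(V_\delta) > \lambda \iff \delta = 0$, the effectiveness and uniformity of $(TM,n) \mapsto V_{\delta(TM,n)}$, and the fact that the reduction targets the non-halting set (co-r.e.\ but not r.e.) rather than the halting set.
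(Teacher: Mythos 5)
Your proof is correct and follows essentially the same strategy as the paper's: both construct a computable one-parameter family of channels whose zero-error capacity exceeds $\lambda$ exactly when the parameter $\delta$ equals zero, and then invoke the algorithmic impossibility of testing $\delta=0$ for computable reals (equivalently, a reduction from the complement of the halting problem, which is how the paper proves that impossibility in Lemma~\ref{lemmaTM}). The only difference is cosmetic: the paper uses the simpler family $W_{\delta,*}$ whose capacity jumps from $\log_2 D$ to $0$, which already straddles every $\lambda\in[0,\log_2 D)$, so your finer gadget pinning the two capacity values at $\log_2(k+1)$ and $\log_2 k$ is more elaborate than necessary, though entirely valid.
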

 
 \begin{proof} Let $\X=\{1,2,\dots, |\X|\}\subset\NN$ and 
 $\Y=\{1,2,\dots, |\Y|\} \subset \NN$ be arbitrary finite
alphabets with $|\X|\geq 2$ and $|\Y|\geq 2$ and let $D=\min\{|\X|,|\Y|\}$.
First we consider the case $|\X|=D$. Let us consider the channel 
\[
W_*(y|x)=\begin{cases}1 & ,{\rm if}\ y=x\\0 & ,{\rm if}\ y\neq x\end{cases}.
\]
It holds $C_0(W_*)=\log_2|D|$.
For $0< \delta< \frac 1{|\Y|-1}$, we define the channel
\[
W_{\delta,*}(y|x)=\begin{cases}1-\delta (|\Y|-1) & ,{\rm if}\ y=x\\\delta & ,{\rm if}\ y\neq x\end{cases}.\]
It holds $C_0(W_{\delta,*})=0$ for $0< \delta< \frac 1{|\Y|-1}$. Let us now
assume that there exists $\hat{\lambda}\in\RR_c$ with $0\leq \hat{\lambda}<\log_2 D$, such that
the set $\{W\in\CHc: C_0(W)>\hat{\lambda}\}$
is semi-decidable. 
Then we consider the Turing machine $TM_{>\hat{\lambda}}$, which accepts this 
set. Furthermore, we consider for 
$0< \delta< \frac 1{|\Y|-1}$ the following
Turing machine $TM_{*}$:
$TM_*$ simulates two Turing machines $TM_1:=TM_{>0}$ and $TM_2:= TM_{>\hat{\lambda}}$ in parallel with:
$TM_{>0}$ gets the input $\delta$ and tests if 
$\delta>0$. $TM_{>0}$ stops if and only if $\delta>0$. It is shown in Lemma~\ref{lemmaTM} that such a Turing machine exists.
For the input $\delta=0$ $TM_{>0}$ computes forever.
The second Turing machine is defined by
$TM_2(\delta):=TM_{>\hat{\lambda}}(W_{\delta,*})$.
For $\delta>0$ holds $C_0(W_{\delta,*})=0$. Therefore $TM_2$ 
stops for $0< \delta< \frac 1{|\Y|-1}$ if and only if
$\delta=0$.

We now let $TM_*$ stop for the input $\delta$ if and only if one of the two Turing machines $TM_{1}$ or $TM_2$ stops. Exactly one Turing machine has to stop for every $0< \delta< \frac 1{|\Y|-1}$.

If the Turing machine $TM_1$ stops at the input $\delta$, we set 
$TM_*(\delta) = 1$.
If the Turing machine $TM_2$ stops at the input $\delta$, we set 
$TM_*(\delta) = 0$. Therefore it holds:
\[
TM_*(\delta)=\begin{cases}0 & ,{\rm if}\ \delta=0\\1 & ,{\rm if}\ 0< \delta< \frac 1{|\Y|-1}.\end{cases}
\]
We have shown in Lemma~\ref{lemmaTM} that such a Turing machine cannot exist.
This proves the theorem for $D=|\X|$. 
The proof for $D = |\Y|$ is exactly the same.\end{proof}
 
 For $W\in \CHc$ and $n\in\NN$, let $M_*(W,n)$ be the cardinality of a maximum code with decoding error 0. This maximum code always exists because we only have a finite set of possible codes for the block length $n$. Of course there exists a well-defined function 
 $M_*(\cdot,n):\CHc\to\NN$ for every $n\in\NN$. Because of Fekete's Lemma we have
 \begin{equation}\label{NA}
 C_0(W)=\lim_{n\to\infty} \frac 1n \log_2 M_*(W,n)=\sup_{n\in\NN} \frac 1n \log_2 M_*(W,n).
 \end{equation}
 We now have the following theorem regarding the Banach-Mazur computability of the function $M_*$.
 \begin{Theorem}\label{N1}
 Let $\X,\Y$ be finite alphabets with $|\X|\geq 2$ and $|\Y|\geq 2$. The function
 \[
 M_*(\cdot,n):\CHc\to\NN
 \]
 is not Banach-Mazur computable for all $n\in\NN$.
 \end{Theorem}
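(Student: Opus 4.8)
The plan is to reduce the statement to Lemma~\ref{Lemma_f1} by exhibiting a computable one-parameter family of channels along which $M_*(\cdot,n)$ becomes an affine image of the non-computable function $f_1$. I would reuse the two-input two-output channels from the proof of Theorem~\ref{compute}: for $0\le\delta<\tfrac12$ set $W_\delta(\cdot\mid 1)=\binom{1-\delta}{\delta}$ and $W_\delta(\cdot\mid 0)=\binom{\delta}{1-\delta}$, identifying the two output letters as $y_0,y_1$. When $|\X|>2$ I extend this to the full alphabet by routing every additional input letter deterministically to $y_0$, which lies in the support of both $W_\delta(\cdot\mid 0)$ and $W_\delta(\cdot\mid 1)$ whenever $\delta>0$. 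The map $\delta\mapsto W_\delta$ sends a computable sequence of computable numbers in $[0,\tfrac12)$ to a computable sequence in $\CHc$, since every channel entry is an affine, hence computable, function of $\delta$.

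Next I would evaluate $M_*(W_\delta,n)$ for every block length. At $\delta=0$ the channel is deterministic: input $1$ and all extra letters produce $y_0$, and input $0$ produces $y_1$, so each input string maps to a fixed output string in $\{y_0,y_1\}^n$. Two input strings are then distinguishable exactly when these output strings differ, whence $M_*(W_0,n)$ equals the number of reachable output strings, namely $2^n$; the $2^n$ strings over $\{0,1\}$ already realize this bound. For $0<\delta<\tfrac12$ every input letter has $y_0$ in its support, so no two inputs are distinguishable, the confusability graph is complete, and $M_*(W_\delta,n)=\alpha(K^{\boxtimes n})=1$ for all $n$. Hence along this family $M_*(W_\delta,n)=2^n-(2^n-1)\,f_1(\delta)$.

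I would then close by contradiction. If $M_*(\cdot,n)$ were Banach-Mazur computable, then for every computable sequence $(\delta_k)_k$ the sequence $(M_*(W_{\delta_k},n))_k$ would be computable, so the restriction $\delta\mapsto M_*(W_\delta,n)$ is Banach-Mazur computable; solving the relation above, $f_1(\delta)=\bigl(2^n-M_*(W_\delta,n)\bigr)/(2^n-1)$ would be Banach-Mazur computable as well, contradicting Lemma~\ref{Lemma_f1}. As $n$ was arbitrary (and $2^n-1\neq 0$ for the nontrivial cases $n\ge 1$), this yields the theorem for every $n$.

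The main obstacle is bookkeeping rather than depth. First, I must confirm that the extension to larger alphabets preserves both endpoints of the jump: routing the extra letters to $y_0$ keeps the graph complete for $\delta>0$ (value $1$) while at $\delta=0$ it only enlarges the $y_0$-clique and leaves the set of reachable output strings equal to $\{y_0,y_1\}^n$ (value still $2^n$). Second, I must ensure that the Specker-type input sequence supplied by the reduction, which encodes a recursively enumerable non-recursive set exactly as in the proof of Lemma~\ref{Lemma_f1}, takes values inside the admissible window $[0,\tfrac12)$; this is arranged by rescaling that sequence into $[0,\tfrac14]$, so that each $W_{\delta_k}$ is a genuine element of $\CHc$ with the asserted value of $M_*(\cdot,n)$, and the contradiction with Lemma~\ref{Lemma_f1} goes through uniformly in $n$.
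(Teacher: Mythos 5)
Your proposal is correct and follows essentially the same route as the paper: the paper's proof also contrasts an ``ideal'' channel (where $M_*(\cdot,N)$ is maximal) with a full-support channel (where $M_*(\cdot,N)=1$) and then invokes the one-parameter family $W_\delta$ from the proof of Theorem~\ref{compute} together with Lemma~\ref{Lemma_f1} to obtain the contradiction. Your version merely spells out the affine relation $M_*(W_\delta,n)=2^n-(2^n-1)f_1(\delta)$ and the alphabet-extension bookkeeping more explicitly than the paper does.
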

 \begin{proof}
 Let $\X$ and $\Y$ be finite
alphabets with $|\X|\geq 2$ and $|\Y|\geq 2$ and let $N\in\NN$ be 
arbitrary. Consider the "ideal channel" $W_1\in\CHc$ with 
$M_*(W,N)=\min\{|\X|,|\Y|\}^N$. Furthermore, consider any channel 
$W_2\in\CHc$ with $W_2(y|x)> 0$ for all $y\in\Y$ and all 
$x\in\X$. Then $M_*(W_2,N) = 1$ for all $N\in\NN$ and consequently, because of \eqref{NA}, $C_0(W_2)=0$. Now we can directly apply the proof of Theorem~\ref{compute} to the function 
$M_*(\cdot,N):\CHc\to\RR$. $M_*(\cdot,N)$ is therefore not Banach-Mazur computable.
\end{proof}
  We now want to examine the question of whether a computable sequence of Banach-Mazur computable lower bounds can be found for $C_0(\cdot)$. We set
 \[
 F_N(W):=\max_{1\leq n\leq N} \frac 1n \log_2 M_*(W,n).
 \]
 For all $W\in \CHc$ and for all $N\in\NN$ we have $F_N(W)\leq F_{N+1}(W)$ and $\lim_{N\to\infty} (W)=C_0(W)$. However, this cannot be expressed algorithmically, because due to Theorem~\ref{N1} the functions $F_N$ are not Banach-Mazur computable. We next want to show that this is a general phenomenon for $C_0$.
 \begin{Theorem}\label{N2}
 Let $\X,\Y$ be finite alphabets with $|\X|\geq 2$ and $|\Y|\geq 2$. There exists no computable sequence $\{F_N\}_{N\in\NN}$ of Banach-Mazur computable functions that simultaneously satisfies the following:
 \begin{enumerate}
     \item For all $N\in\NN$ holds $F_N(W)\leq C_0(W)$ for all $W\in\CHc$.
     \item For all $W\in\CHc$ holds $\lim_{N\to\infty} F_N(W)=C_0(W)$.
 \end{enumerate}

 \end{Theorem}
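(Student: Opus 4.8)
The plan is to argue by contradiction, reducing the existence of such a family $\{F_N\}_{N\in\NN}$ to deciding membership in a recursively enumerable but non-recursive set. I would reuse the binary channel family $W_\delta$ from the proof of Theorem~\ref{compute}, for which $C_0(W_0)=1$ and $C_0(W_\delta)=0$ whenever $0<\delta<\tfrac12$, together with the Specker-type encoding of a halting computation already employed in Lemma~\ref{Lemma_f1}. The decisive structural feature I would exploit is that a family of \emph{lower} bounds converging pointwise from below turns the statement ``$C_0>\tfrac12$'' into an \emph{existential} condition ``$\exists N:\ F_N>\tfrac12$'', and it is exactly this one-sidedness that will clash with non-recursiveness.

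Concretely, fix a recursively enumerable, non-recursive set $A\subseteq\NN$ accepted by a Turing machine $TM_A$, and for each $n$ define, as in Lemma~\ref{Lemma_f1}, a computable double sequence of rationals
\[
\mu_{n,m} := \begin{cases} \tfrac{1}{2^{l+1}} & TM_A \text{ stops for input } n \text{ after } l\leq m \text{ steps},\\ \tfrac{1}{2^{m+1}} & \text{otherwise,}\end{cases}
\]
so that the limits $\mu_n^* := \lim_{m\to\infty}\mu_{n,m}$ form a computable sequence of computable numbers with $\mu_n^*\in[0,\tfrac14]$, and $\mu_n^*>0$ exactly when $n\in A$. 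Since $\delta\mapsto W_\delta$ sends a computable sequence of parameters to a computable sequence of channels, $(W_{\mu_n^*})_{n\in\NN}$ is a computable sequence in $\CHc$, and by Theorem~\ref{compute} we have $C_0(W_{\mu_n^*})=0$ if $n\in A$ and $C_0(W_{\mu_n^*})=1$ if $n\notin A$. Invoking the hypothesis that $\{F_N\}$ is a \emph{computable} sequence of Banach--Mazur computable functions, the double sequence $\bigl(F_N(W_{\mu_n^*})\bigr)_{N,n\in\NN}$ is then a computable double sequence of computable reals.

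I would then read off the contradiction as follows. If $n\notin A$ then $C_0(W_{\mu_n^*})=1$ and property~(2) gives $\lim_{N\to\infty}F_N(W_{\mu_n^*})=1$, so $F_N(W_{\mu_n^*})>\tfrac12$ for some $N$; if $n\in A$ then property~(1) gives $F_N(W_{\mu_n^*})\leq C_0(W_{\mu_n^*})=0<\tfrac12$ for \emph{all} $N$. Hence $n\notin A$ holds if and only if there exists $N$ with $F_N(W_{\mu_n^*})-\tfrac12>0$. Because each $F_N(W_{\mu_n^*})$ is a computable real furnished with an effective representation, Lemma~\ref{lemmaTM} provides a copy of $TM_{>0}$ that halts on input $F_N(W_{\mu_n^*})-\tfrac12$ precisely when $F_N(W_{\mu_n^*})>\tfrac12$. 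Dovetailing these machines over all $N\in\NN$ yields a Turing machine that halts on input $n$ if and only if $n\notin A$, i.e.\ a machine accepting exactly $A^c$. Since $A$ is recursively enumerable but not recursive, $A^c$ is not recursively enumerable, a contradiction; therefore no such $\{F_N\}$ exists.

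The step I expect to be the main obstacle is the correct bookkeeping of the computability notions rather than any analytic estimate. Banach--Mazur computability of a single $F_N$ does \emph{not} let one evaluate $F_N$ on an individual input, so the argument cannot proceed by computing $F_N(W)$ for a generic channel $W$ and comparing with a threshold; the reduction must instead commit in advance to the one specific computable sequence $(W_{\mu_n^*})_n$ and apply the \emph{uniform} (computable-sequence) form of the hypothesis to obtain the computable double array $\bigl(F_N(W_{\mu_n^*})\bigr)_{N,n}$. Care is also needed to keep the parameters strictly below $\tfrac12$ (hence the harmless rescaling to $\tfrac{1}{2^{l+1}}$) so that Theorem~\ref{compute} applies verbatim, and to observe that it is the lower-bound hypothesis that forces the ``$n\in A$'' branch never to halt, which is exactly what makes the accepted set equal to $A^c$ and produces the contradiction.
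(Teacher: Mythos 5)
Your proof is correct, and it reaches the contradiction by a somewhat different route than the paper. The paper's proof is modular: it replaces $F_N$ by $\bar F_N(W)=\max_{1\le n\le N}F_n(W)$ to get a monotone computable sequence, observes that the lower-bound-plus-convergence hypothesis makes ``$\exists N_0:\ \bar F_{N_0}(W)>\lambda$'' equivalent to ``$C_0(W)>\lambda$'', and concludes that $\{W\in\CHc:C_0(W)>\lambda\}$ would be semi-decidable, contradicting Theorem~\ref{decide}. You exploit exactly the same structural feature --- lower bounds converging from below turn ``$C_0>\lambda$'' into an existential, hence semi-decidable, condition --- but instead of citing Theorem~\ref{decide} you inline the undecidability: you commit to the specific computable sequence $(W_{\mu_n^*})_n$ built from a Specker-type encoding of a recursively enumerable non-recursive set $A$, and show that dovetailing $TM_{>0}$ over the computable double array $\bigl(F_N(W_{\mu_n^*})-\tfrac12\bigr)_{N,n}$ would semi-decide $A^c$. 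Your version is self-contained and, arguably, handles the Banach--Mazur subtlety more cleanly: the paper appeals to the $s^m_n$ theorem to obtain a machine $\underline{TM}(W,N)$ evaluating $\bar F_N$ pointwise, which sits uneasily with the fact that Banach--Mazur computability only guarantees behaviour on computable \emph{sequences}; you avoid this by applying the uniform hypothesis to one fixed sequence from the outset, which is precisely the device used in Lemma~\ref{Lemma_f1}. The paper's route buys reusability (the semi-decidability statement of Theorem~\ref{decide} is of independent interest and is quantified over all $\lambda$ and all alphabet sizes), whereas yours buys a shorter, more explicit diagonalization at the fixed threshold $\tfrac12$ on the binary channel family; only the trivial bookkeeping that $\mu_n^*<\tfrac12$ (i.e.\ $l\ge 1$ in the step count) needs to be kept in mind so that $W_{\mu_n^*}$ stays in the admissible parameter range.
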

 \begin{proof}
Assume to the contrary that there exist finite alphabets \(\hat{\X}\) and \(\hat{\Y}\) with $|\hat{\X}|\geq 2$ and $|\hat{\Y}|\geq 2$, as well as a computable sequence $\{F_N\}$ of Banach-Mazur computable functions, such that the following holds true:
\begin{enumerate}
    \item For all $N\in\NN$ and all $W\in\hCHc$, we have $F_N(W)\leq C_0(W)$.
    \item For all $W\in\hCHc$, we have $\lim_{N\to\infty} F_N(W)=C_0(W)$.
\end{enumerate}

We consider for $N\in \NN$ the function
\[
\bar{F}_N(W)=\max_{1\leq n\leq N} F_n(W),\ W\in\CHc.
\]
The function $\bar{F}_N$ is Banach-Mazur computable (see \cite{PoRi17}). The sequence $\{\bar{F}_N\}_{N\in\NN}$ is 
a computable sequence of Banach-Mazur computable functions.
For all $N\in\NN$ holds $\bar{F}_N(W)\leq \bar{F}_{N+1}(W) $
and $\lim_{N\to\infty} \bar{F}_N(W)=C_0(W)$ for $W\in\hCHc$.
Since the sequence $\{\bar{F}_N\}$ can be computed, we can find a Turing machine $\underline{TM}$, so that for all $W\in\CHc$ and for all $N\in\NN$, $\bar{F}_N(W)=\underline{TM}(W,N)$ applies (by the $s^m_n$ Theorem \cite{Kle43}).
Let $\lambda$ be given arbitrarily with $0<\lambda<\log_2(\min\{|\X|,|\Y|\})$. In addition, we also use the Turing machine $TM_{>\lambda}$, which stops for the input $x$ if and only if $x>\lambda$ (see proof Theorem~\ref{decide}). Just as in the proof of Theorem~\ref{decide}, we use the two Turing machines $TM_{>\lambda}$ and $\underline{TM}$ to build a Turing machine $TM_*$. The Turing machine $TM_*$ stops exactly when an $N\in\NN$ exists, so that 
$\bar{F}_{N_0}(W)>\lambda$ holds. Such a $N_0$ exists for 
$W\in\hCHc$ if and only if $C_0(W)>\lambda$. 
The set 
$\{W\in\hCHc: C_0(W)>\lambda \}$ 
would then be semi-decidable, which is a contradiction to the assumption.
\end{proof}

 We make the following observation: For all $\mu\in\RR_c$ with $\mu\geq 1$, the sets $\{ G\in\G: \Theta(G)>\mu\}$ are semi-decidable. It holds $2^{C_0(\W)}=\Theta(G_W)$.
\begin{Theorem}\label{AB}
The following three statements A, B and C are equivalent:
\begin{itemize}
    \item[A]  For all $\X,\Y$, where $\X$ and $\Y$ are finite alphabets with $|\X|\geq 2$ and $|\Y|\geq 2$ and for all
 $\lambda\in\RR_c$ with $0<\lambda$ the sets
 $$\{W\in\CHc: C_0(W)<\lambda\}$$ are semi-decidable.
    \item[B] For all $\mu\in\RR_c$ with $\mu> 1$ the sets $$\{ G\in\G: \Theta(G)<\mu\}$$ are semi-decidable. 
    \item[C] For all $\X,\Y$, where $\X$ and $\Y$ are finite alphabets with $|\X|\geq 2$ and $|\Y|\geq 2$ and for all $\lambda\in\RR_c$ with $\lambda>0$ the sets
    $$\left\{ \{W(\cdot|\cdot,s)\}_{s\in\S} \in AVC_{0-1}: C_{\max}(\{W(\cdot|\cdot,s)\}_{s\in\S})<\lambda \right\}$$ are semi-decidable.
\end{itemize}
\end{Theorem}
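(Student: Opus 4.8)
The plan is to prove the three-way equivalence by establishing A $\Leftrightarrow$ B and A $\Leftrightarrow$ C, each through one of the two classical correspondences at our disposal: A $\Leftrightarrow$ B is routed through Shannon's identity $2^{C_0(W)}=\Theta(G_W)$ (Theorem~\ref{Shannon}), and A $\Leftrightarrow$ C through Ahlswede's identity $C_0(W^*)=C_{\max}(\W)$ (Theorem~\ref{ahlswede}). In each of the two equivalences one implication is a direct channel construction, whereas the reverse implication must cope with the fact that the combinatorial datum attached to a \emph{computable} channel $W$ (its confusability graph, respectively the support of its transition matrix) is \emph{not} computable from $W$. That asymmetry is where essentially all of the work sits, and it is handled uniformly by an enumeration-plus-monotonicity argument.

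For the direct directions I would argue as follows. For A $\Rightarrow$ B, given a concrete finite graph $G$ on $q\geq 2$ vertices (the cases $q\leq 1$ being immediate) and a computable $\mu>1$, I build a computable channel $W_G$ whose confusability graph is exactly $G$: take input alphabet $V(G)$ and output alphabet $V(G)\sqcup E(G)$, and let $W_G(\cdot\mid x)$ be uniform on the private symbol $x$ together with all edges incident to $x$. Two inputs then share an output if and only if they span an edge, so $G_{W_G}=G$ and hence $\Theta(G)=2^{C_0(W_G)}$; feeding $W_G$ to the semi-decider of A with threshold $\lambda=\log_2\mu$ (which is in $\RR_c$ and satisfies $\lambda>0$) settles $\Theta(G)<\mu$. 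For A $\Rightarrow$ C the construction is even simpler and, crucially, undecidability-free: given a $0$-$1$ AVC $\W$ with state set $\S$, form the computable channel $W^*=\tfrac1{|\S|}\sum_{s\in\S}W(\cdot\mid\cdot,s)$, which needs only averaging of computable reals; by Ahlswede's construction (ii) with uniform $\pi$ this gives $C_0(W^*)=C_{\max}(\W)$, and the semi-decider of A applied to $W^*$ settles $C_{\max}(\W)<\lambda$.

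The reverse directions B $\Rightarrow$ A and C $\Rightarrow$ A are the heart of the matter, because from a computable channel $W$ we can, by Lemma~\ref{lemmaTM}, only \emph{confirm} positivity of an entry (via $TM_{>0}$) and never confirm its absence. The remedy uses monotonicity together with finiteness of the alphabets. For B $\Rightarrow$ A, I would dovetail the positivity tests to produce an increasing sequence of graphs $G_W^{(t)}\subseteq G_W$ (edges confirmed by stage $t$); since the vertex set is finite, $G_W^{(t)}=G_W$ for all sufficiently large $t$. Because $\Theta$ is monotone decreasing under adding edges, $\Theta(G_W^{(t)})\geq\Theta(G_W)$ for every $t$, with equality eventually. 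Running the semi-decider of B on every $G_W^{(t)}$ in parallel and halting as soon as one reports $\Theta(G_W^{(t)})<\mu$ then halts if and only if $\Theta(G_W)<\mu$, i.e.\ iff $C_0(W)<\lambda$ with $\mu=2^\lambda$. The reduction C $\Rightarrow$ A is the exact analogue: enumerate the confirmed support $S^{(t)}\nearrow\supp(W)$ (the set of positive entries), let $\W^{(t)}$ be the $0$-$1$ AVC whose state set consists of all deterministic channels supported inside $S^{(t)}$, observe that enlarging the state set can only lower $C_{\max}$ (any code with maximal error $\lambda$ over a larger state set has maximal error $\leq\lambda$ over a subset), so that $C_{\max}(\W^{(t)})\geq C_{\max}(\W(W))=C_0(W)$ with equality once $S^{(t)}=\supp(W)$, and halt as soon as the semi-decider of C reports $C_{\max}(\W^{(t)})<\lambda$.

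The main obstacle is precisely these reverse directions: the naive reduction ``compute $G_W$, respectively the Ahlswede AVC, from $W$ and then apply B, respectively C'' is impossible because those objects are not computable from $W$. What rescues the argument is that the over-approximations $G_W^{(t)}$ and $\W^{(t)}$ are \emph{eventually exact} (finiteness of the alphabets) and lie on the correct side of the true capacity (monotonicity), so that a one-sided confirmation of a strict bound $<\mu$ (resp.\ $<\lambda$) on the over-approximation is both sound — since the true value is no larger — and complete — since at the stabilization stage the over-approximation equals the true object. The routine points to dispatch carefully are then the two monotonicity facts ($\Theta$ decreasing in edges, $C_{\max}$ decreasing in the state set), the exactness of the graph/support constructions, and the verification that the transformed thresholds $\log_2\mu$ and $2^\lambda$ remain computable and stay within the ranges required by A, B, and C.
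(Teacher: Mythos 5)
Your proposal is correct and follows essentially the same route as the paper: the forward directions use the same uniform-on-vertex-plus-incident-edges channel construction and Ahlswede's averaging construction, and the reverse directions use the same dovetailed enumeration of confirmed positive entries, producing monotone over-approximations ($\Theta$ decreasing in edges, $C_{\max}$ decreasing in the state set) that eventually stabilize on the true confusability graph, respectively the true $0$-$1$ AVC. No substantive differences to report.
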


\begin{proof} First we show
$A \Rightarrow B$. Let $\mu\in\RR_c$ with $\mu>1$. Then $\mu=2^\lambda$
with $\lambda\in\RR_c$. Let $\X,\Y$ be finite sets with $|\X|\geq 2$ and $|\Y|\geq 2$. Then the set 
\[
\left\{ W\in\CHc: C_0(W)<\lambda\right\}
\]
is semi-decidable by assumption. Let $TM_{<\lambda}$ be the associated Turing machine.
Let $\hat{G}\in \{G\in\G: \Theta(G)<\mu \}$ be chosen arbitrarily. 
From $\hat{G}=(\hat{V},\hat{E})$ we algorithmically construct a channel $W_{\hat{G}}\in\C(\X,\Y)$ with $|\X|=|\Y|=|\hat{G}|$ as follows.
We consider the set $\Z:=\left\{\{v\}: v\in V \right\}\cup E$ and an arbitrary output alphabet $\Y$ with the
bijection $f:\Z\to Y$. Therefore, it is obvious that
$|Y|=|\Z|$. For $v\in V$ we set 
$Y(v):=\left\{f(z):z\in \Z \text{ and } v\in Z \right\}$.
We define
\[
\hat{W}_G(y|v)=\begin{cases} \frac 1{|Y(v)|} & {\rm if}\ y\in Y(v) \\0 & {\rm otherwise.}\end{cases}
\]
Of course, $G_{\hat{W}} = \hat{G}$ applies to the confusability graph of 
$W_{\hat{G}}$.
It holds $C_0(W_{\hat{G}})=\log_2\Theta(\hat{G})$. Therefore
$W_{\hat{G}}\in\{W\in\CHc: C_0(W)<\lambda\}$.
Therefore, $TM_{<\mu}(G):=TM_{<\lambda}(W_G)$ stops.
Conversely, if $TM_{<\lambda}(W_G)$ stops for $G\in\G$, then
$C_0(W_G)<\lambda$. Therefore, $\Theta(G)<\mu$. Thus we have 
shown $A \Rightarrow B$. 

Now we show
$B \Rightarrow A$.  Let $\X$ and $\Y$ be finite alphabets with $|\X|\geq 2$ and $|\Y|\geq 2$. We construct a sequence of 
confusability graphs as follows.

For all pairs $x,x'\in\X$ with $x\neq x'$, we start the first computation step of the Turing machine $TM_{>0}$ for $N=1$ in parallel for the number $\sum_{y=1}^{|\Y|} W(y|x) W(y|x')=d(x,x')$.
That means for input $d(x,x')$ we compute the first step of the calculation of $TM_{>0}(d(x,x'))$. If the Turing machine $TM_{>0}$ stops at input $d(x,x')$ in the first step, then
$G_1$ has the edge $\{x,x'\}$. If for $x,x'\in\X$ with $x\neq x'$
the Turing machine $TM_{>0}$ does not stop after the first step,
then $\{x,x'\}\not\in\E(G_1)$.

For $N=2$ we construct $G_2$ as follows. For all $x,x'$ that have no edge in $G_1$, we let $TM_{>0}$ calculate the second computation step at input $d(x,x')$. If $TM_{>0}$ stops, then we set:
$G_2$ has the edge $\{x,x'\}$ and also gets all
edges of $G_1$. If for $x,x'\in\X$ with $x\neq x'$
the Turing machine $TM_{>0}$ does not stop after the second step,
then $\{x,x'\}\not\in\E(G_2)$.
We continue this process iteratively and get a sequence of 
graphs: $G_1, G_2, G_3,\dots$ with the same vertice set and
$\E(G_1)\subset\E(G_2)\subset\E(G_3)\subset\dots$ .
The Turing machine $TM_{>0}$ stops for the input $d(x,x')$ 
if and only if $d(x,x')> 0$. We have a number of tests in each step that fall monotonically depending on $N$ (generally not strictly).
It holds
\[
\Theta(G_1)\geq\Theta(G_2)\geq\dots\geq \Theta(G_n)\geq\dots .
\]
There exists an $n_0$, such that \[
G_W=G_{n_0}.
\]
$G_W$ is the confusability graph of $W$. Note that we don't have a computable upper bound for $n_0$. However, the latter is not required for the proof.
Therefore,
\[
\Theta(G_{n_0})=2^{C_0(W)}.
\]
Let $TM_{\G,\lambda}$ be the Turing machine which accepts the set
$\{ G\in\G : \Theta(G)<2^\lambda \}$. 
We have already shown that 
$\hat{W}\in\{W\in\CHc: C_0(W)<\lambda\}$ holds if and only if $n_0\in\NN$ exists, so that the sequence $\{\hat{G_n}\}_{n\in\NN}$ satisfies $E(\hat{G}_n)\subset E(\hat{G}_{n+1})$. These are all graphs with the same set of nodes and there is an $n_0$ with $\Theta(\hat{G}_{n_0})<2^\lambda$. Furthermore, the sequence is computable.
We only have to test for the sequence $\{\hat{G_n}\}_{n\in\NN}$, which is generated algorithmically from $\hat{W}$, whether 
$\hat{G}_n\in \{ G\in\G : \Theta(G)<2^\lambda \}$
applies. This means that we have to test whether 
$TM_{\G,\lambda}(\hat{G}_n)$ stops for a certain $n$.
We compute the first step for $TM_{\G,\lambda}(\hat{G}_1)$. If the Turing machine stops, then $C_0(\hat{W})<\lambda$. Otherwise we compute the second step for $TM_{\G,\lambda}(\hat{G}_1)$ and the first step for $TM_{\G,\lambda}(\hat{G}_2)$. We continue recursively like this and it is clear that the computation stops if and only if $C_0(\hat{W})<\lambda$. Otherwise the Turing 
machine computes forever.

Now we show
$A \Rightarrow C$. 
Let $\lambda\in\RR_c$ and let $\{W(\cdot|\cdot,s)\}_{s\in\S}\in AVC_{0-1}$ be arbitrary chosen. From $\{W(\cdot|\cdot,s)\}_{s\in\S}$ we can effectively construct a DMC $W^*\in\CH_c$ according to the Ahlswede approach (Theorem~\ref{ahlswede}), so that 
$C_0(W^*)=C_{\max} (\{W(\cdot|\cdot,s)\}_{s\in\S})$.
This means that $C_0(W^*) <\lambda$ if and only if \(C_{\max} (\{W(\cdot|\cdot,s)\}_{s\in\S}) < \lambda\). By assumption, the set \(\left\{ W\in\CHc: C_0(W)<\lambda\right\}\) is semi-decidable. We have used it to construct a Turing machine $TM_{c,<\lambda}$ that stops when 
$C_{\max}(\{W(\cdot|\cdot,s)\}_{s\in\S})<\lambda$ applies, otherwise $TM_{c,<\lambda}$ computes for ever. Therefore, C holds. 

Now we show
$C \Rightarrow A$. The idea of this part of the proof is similar to that of part $B\Rightarrow A$. Let $W\in\CHc$ be arbitrary. Similar to case $B\Rightarrow A$, we construct a suitable sequence 
\(
\{ \{W_k(\cdot|\cdot,s)\}_{s\in\S_k}  \}_{k\in\NN}
\)
of computable sequences of \(0-1\) AVCs on \(\X\) and \(\Y\), such that the following assertions are satisfied: 
\begin{enumerate}
    \item For all \(k\in\NN\) we have $\S_k\subset \S_{k+1}$ as well as
    \begin{align}
         W_{k}(y|x,s) = W_{k+1}(y|x,s)
    \end{align}
    for all \(x\in\X\), all \(y\in\Y\) and all \(s \in \S_k\). 
    \item There exists a $k_0\in\NN$ such that $\S_{k_0} = \S_k$ for all $k\geq  k_0$ and
    \begin{align}
        W(y|x)> 0 \Leftrightarrow \exists s\in\S_{k_0}: W_{k_0}(y|x,s) = 1
    \end{align}
    for all \(x\in\X\) and all \(y\in\Y\).
\end{enumerate}
The AVC \(\{W_{k_0}(\cdot|\cdot,s)\}_{s\in\S_{k_0}}\) then satisfies the requirements of Theorem \ref{ahlswede}.

In general $k_0$ cannot be computed effectively depending on $W\in\CHc$,
but this is not a problem for the semi-decidability for all finite $\X,\Y$ with $|\X|\geq 2$ and $|\Y|\geq 2$.

So we have for $k\in\NN$, 
\[
C_{\max}(\{W_{k+1}(\cdot|\cdot,s)\}_{s\in\S_{k+1}})\leq C_{\max}(\{W_k(\cdot|\cdot,s)\}_{s\in\S_k})
\]
and it holds 
\[
C_0(W)<\lambda\ \Leftrightarrow \exists k_0:C_{\max}(\{W_{k_0}(\cdot|\cdot,s)\}_{s\in\S_{k_0}})<\lambda. 
\]
We can use this property and the semi-decidability requirement in C just like in the proof of $B\Rightarrow A$ to construct a Turing machine $TM_{<\lambda}$, which stops for $W\in\CHc$ exactly then,
if $C_0(W) <\lambda$ applies or computes forever.

This proves the theorem.
\end{proof}

\begin{Remark} (see also Section~1)
  Noga Alon has asked if the set $\{G:\Theta(G)<\mu\}$ is semi-decidable (see \cite{AL06}). We see that the answer to Noga Alon's question is positive if and only if Assertion A from Theorem~\ref{AB} holds true. 
This is interesting for the following reason: on the one hand, the set $\{G\in\G: \theta(G)>\mu\}$ is semi-decidable
for $\mu\in\RR_c$ with $\mu\geq 1$, but on the other hand, even 
for $|\X|=|\Y|=2$ and $\lambda\in\RR_c$ with $0<\lambda<1$, the set $\{ \W\in\CHc: C_0(\W)>\lambda\}$ is not semi-decidable. So there is no equivalence regarding the semi-decidability of these sets.
\end{Remark}
In the next theorem we look at useless channels in terms of the zero-error capacity. The set of useless channels is defined by
\[
N_0(\X,\Y) :=\{ W\in\CHc: C_0(W)=0\},
\]
where $\X$ and $\Y$ are finite alphabets with $|\X|\geq 2$
and $|\Y|\geq 2$. 
It is clear by our previous theorem that $N_0(\X,\Y)$ is not semi-decidable.

\begin{Theorem}\label{useless}
Let $\X$ and $\Y$ be finite alphabets with $|\X|\geq 2$
and $|\Y|\geq 2$. Then the set $N_0(\X,\Y)$ is semi-decidable.
\end{Theorem}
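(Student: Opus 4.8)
The plan is to reduce the test ``$C_0(W)=0$'' to finitely many strict-positivity tests on computable reals, each of which is semi-decidable by Lemma~\ref{lemmaTM}. The key observation is a purely combinatorial characterization: $C_0(W)=0$ holds if and only if the confusability graph $G_W$ is complete, equivalently, if and only if every pair of distinct input letters is confusable. In the notation used in the proof of Theorem~\ref{AB}, writing $d(x,x'):=\sum_{y\in\Y} W(y|x)\,W(y|x')$, this is precisely the condition that $d(x,x')>0$ for all $x,x'\in\X$ with $x\neq x'$.

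First I would establish this characterization. If $d(x,x')=0$ for some pair $x\neq x'$, then $W(\cdot|x)$ and $W(\cdot|x')$ are orthogonal, so $M(W,1)\geq 2$; by supermultiplicativity $M(W,n)\geq M(W,1)^n\geq 2^n$ (by the Fekete argument behind \eqref{NA}), whence $C_0(W)\geq 1>0$. Conversely, assume $d(x,x')>0$ for every pair $x\neq x'$, i.e.\ $G_W$ is complete. Given distinct words $x^n,x'^n\in\X^n$, I choose in each coordinate $t$ a symbol $y_t$ with $W(y_t|x_t)>0$ and $W(y_t|x'_t)>0$: when $x_t=x'_t$ this is possible because $W(\cdot|x_t)$ is a probability distribution, and when $x_t\neq x'_t$ because $\{x_t,x'_t\}$ is an edge of $G_W$. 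Then $W^n(y^n|x^n)>0$ and $W^n(y^n|x'^n)>0$, so $x^n$ and $x'^n$ are not distinguishable; hence $M(W,n)=1$ for every $n$ and $C_0(W)=0$.

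With the characterization in hand the algorithm is immediate. For each of the finitely many pairs $\{x,x'\}$ with $x\neq x'$, the number $d(x,x')$ is a computable real, since $W\in\CHc$ and $\RR_c$ is closed under the finitely many additions and multiplications involved, so a representation of $d(x,x')$ can be produced from $W$. I would then run, dovetailed in parallel, the machine $TM_{>0}$ of Lemma~\ref{lemmaTM} on each input $d(x,x')$, and let the composite machine $TM_{N_0}$ halt exactly when all of these finitely many sub-computations have halted. Because a finite conjunction of semi-decidable predicates is semi-decidable, $TM_{N_0}$ halts if and only if $d(x,x')>0$ for every pair, i.e.\ if and only if $C_0(W)=0$; this exhibits $N_0(\X,\Y)$ as semi-decidable.

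The real content lies in the equivalence rather than in the algorithm, and the step to get right is the forward implication that a complete confusability graph forces $M(W,n)=1$ for all block lengths $n$, not merely $M(W,1)=1$. It is also worth noting the sharp asymmetry with Theorem~\ref{decide} (taken at $\lambda=0$): deciding $C_0(W)=0$ only requires \emph{certifying} that all pairwise products $d(x,x')$ are strictly positive, which $TM_{>0}$ accomplishes, whereas semi-deciding the complement $C_0(W)>0$ would require detecting that some $d(x,x')$ \emph{vanishes} --- exactly the undecidable ``$x=0$'' test ruled out by Lemma~\ref{lemmaTM}.
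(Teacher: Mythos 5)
Your proposal is correct and follows essentially the same route as the paper: reduce membership in $N_0(\X,\Y)$ to finitely many strict-positivity tests handled by $TM_{>0}$ from Lemma~\ref{lemmaTM}, run in parallel, halting when all succeed (the paper tests each product $W(y|x)W(y|x')$ and asks for existence of a good $y$ per pair, while you sum over $y$ into $d(x,x')$ first — an equivalent reformulation already used in the proof of Theorem~\ref{AB}). Your explicit verification that a complete confusability graph forces $M(W,n)=1$ for all $n$ is a welcome detail the paper compresses into the remark that $\Theta(G)=1$ for complete $G$.
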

\begin{proof} 
For the proof of this theorem we use the proof of Theorem~\ref{AB}.
We have to construct a Turing machine $TM_0$ as follows. $TM_0$ is defined on $\CHc$ and stops for an input $W$ if and only if $W\in N_0(\X,\Y)$, otherwise it calculates forever. For the input $W$ we start the Turing machine $TM_{>0}$ in parallel for all $x,x'\in\X$ with $x\neq x'$ and test $TM_{>0}(W(y|x)W(y|x'))$. We let all $|\Y||\X|(|\X|-1)$ Turing machines $TM_{<0}$ compute one step of the computation in parallel. As soon as a Turing machine stops, it will not be continued. The Turing machine $TM_0(W)$ stops if and only if there exists a $y$ for every $x,x'\in\X$ with $x\neq x'$, such that $TM_{>0}(W(y|x)W(y|x'))$ stops. Then the confusability graph $G$ is a complete graph and consequently $\Theta(G) = 1$ and $C_0 (W) = 0$.\end{proof}

In \cite{Z1}, on the other hand, it was shown that the zero error capacity for fixed input and output alphabets can be calculated on a Blum-Shub-Smale machine.

\section{Computability of \texorpdfstring{$\Theta$}{} and 0-1 AVCs}\label{AVC}

We know that the set $\{ G: \Theta(G)=0\}$  is decidable and we know that the set $\{\{W(\cdot|\cdot,s)\}_{s\in\S}\in AVC_{0-1}: C_{\max}(\{W(\cdot|\cdot,s)\}_{s\in\S})=0\}$ is decidable. However, we have shown nothing about the computability of the above quantities so far. If we look at the 0-1-AVC with average errors, it holds that $C_{av}:AVC_{0-1}\to\RR_c$ is calculable and the set $\{\{W(\cdot|\cdot,s)\}_{s\in\S}\in AVC_{0-1}: C_{av}(\{W(\cdot|\cdot,s)\}_{s\in\S})=0\}$ is decidable.
It holds $N_{0,av}>N_{0,\max}$. For an AVC \(\{W(\cdot|\cdot,s)\}_{s\in\S}\) let
$M(x):=\{y\in\Y: \exists s\in \S\ \text{with } W(y|x,s)=1\}$ we have:
\[
C_{\max}(\{W(\cdot|\cdot,s)\}_{s\in\S}) =0 \Leftrightarrow \forall
x,\hat{x}\in\X \text{ holds } M(x)\cap M(\hat{x})\neq \emptyset.
\]
In general, it is unclear whether $\Theta(G)$  and $C_{\max}(\{W(\cdot|\cdot,s)\}_{s\in\S})$ are computable.
The computability of both capacities is open, but we can show the computability of the average error capacity of 0-1 AVCs. For a comprehensive survey on the general theory of AVCs, see \cite{AADT19}.

\begin{Theorem}\label{av}
The function $C_{av}:AVC_{0-1}\to\RR_c$ is Borel-Turing computable. 
\end{Theorem}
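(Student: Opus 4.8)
The plan is to reduce the computation of $C_{av}$ to two effective subtasks by invoking the classical dichotomy for the average-error capacity of arbitrarily varying channels (see the survey \cite{AADT19}). That dichotomy states
\[
C_{av}(\W)=\begin{cases}0 & \text{if }\W\text{ is symmetrizable,}\\ C_r(\W) & \text{if }\W\text{ is not symmetrizable,}\end{cases}
\]
where the random-coding capacity is
\[
C_r(\W)=\max_{P\in\P(\X)}\ \min_{q\in\P(\S)} I\bigl(P;\bar{W}_q\bigr),\qquad \bar{W}_q(y|x)=\sum_{s\in\S}q(s)\,W(y|x,s).
\]
Since a $0$-$1$ AVC is specified by finitely many $\{0,1\}$ entries, the input is a finite object, and Borel-Turing computability amounts to a single algorithm that, given this description and a target precision $2^{-M}$, returns a rational within $2^{-M}$ of $C_{av}(\W)$. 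Thus I only need (i) to decide symmetrizability and, on the negative branch, (ii) to approximate $C_r(\W)$.

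First I would settle symmetrizability. Recall that $\W$ is symmetrizable iff there is a stochastic kernel $U:\X\to\P(\S)$ satisfying
\[
\sum_{s\in\S}W(y|x,s)\,U(s|x')=\sum_{s\in\S}W(y|x',s)\,U(s|x)
\]
for all $x,x'\in\X$ and all $y\in\Y$. For a $0$-$1$ AVC every coefficient $W(y|x,s)$ is $0$ or $1$, so this is a finite system of linear equalities with integer coefficients in the unknowns $U(s|x)$, together with the simplex constraints $U(s|x)\geq 0$ and $\sum_{s}U(s|x)=1$. Feasibility of a rational linear system is decidable (for instance by Fourier--Motzkin elimination or any exact linear-programming routine), so a Turing machine can decide symmetrizability and, in the symmetrizable case, output the exact value $0$.

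Second, on the non-symmetrizable branch I would approximate $C_r(\W)$. The only transcendental ingredient is the logarithm inside the mutual information, while all channel entries are rational; hence for computable $P$ and $q$ the averaged entries $\bar{W}_q(y|x)$ are computable reals and $I\bigl(P;\bar{W}_q\bigr)$ is a computable real depending computably on $(P,q)$. The objective is uniformly continuous on the compact product $\P(\X)\times\P(\S)$: the only delicate points are where some $\bar{W}_q(y|x)$ vanishes, but each contribution is of the form $t\log t$, which extends continuously to $t=0$ with a modulus of continuity of order $\delta\log(1/\delta)$, yielding an explicit modulus of continuity for the whole objective that is computable from $|\X|,|\Y|,|\S|$. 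Consequently $g(P):=\min_{q}I\bigl(P;\bar{W}_q\bigr)$ inherits the same effective modulus, and $C_r(\W)=\max_{P}g(P)$ can be approximated by evaluating the $\max$--$\min$ over a rational grid of mesh $\delta$ on both simplices, computing the logarithms to sufficiently high precision, and choosing $\delta$ small enough that the modulus of continuity plus the arithmetic truncation error falls below $2^{-M}$.

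The main obstacle is precisely this third step: turning the saddle-value $\max_P\min_q$ into an \emph{effective} approximation with a guaranteed error bound, rather than a merely pointwise-computable quantity. This requires an explicit, input-dependent modulus of continuity for the mutual information up to the boundary of the simplex, where the logarithmic terms are singular, and a careful propagation of that modulus through the nested optimization; here the $t\log t$ regularity and the computable compactness of the probability simplices carry the argument. By comparison, invoking the dichotomy and deciding symmetrizability are routine once Ahlswede's theory and exact rational linear programming are available, so the genuine work lies in the effective optimization of $C_r$.
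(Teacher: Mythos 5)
Your proposal is correct and follows the same overall architecture as the paper's proof: both rest on the dichotomy that $C_{av}$ of a $0$-$1$ AVC equals $0$ exactly when the AVC is symmetrizable and otherwise equals the random-coding capacity, so the whole problem reduces to (i) deciding symmetrizability and (ii) effectively evaluating the capacity formula on the non-symmetrizable branch. Where you genuinely diverge is in step (i). The paper does not invoke linear programming; it proves a structural lemma instead: if a $0$-$1$ AVC admits any symmetrizer $U$, then the supports $I_U(x)=\{s\in\S : U(s|x)>0\}$ all have a common cardinality $\nu$, and $U$ can be replaced by a canonical symmetrizer $U^*$ whose entries lie in $\{0,1/\nu\}$. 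This reduces the search to a finite enumeration over subsets and uniform distributions, which a Turing machine carries out exactly. Your observation that the symmetrizability condition for a $0$-$1$ AVC is a finite rational linear feasibility problem, decidable by exact LP or Fourier--Motzkin elimination, is a shorter and equally valid route; it buys brevity at the cost of importing a black box, while the paper's argument is self-contained and exhibits an explicit canonical witness. On step (ii) the two proofs agree in substance: the paper writes $C_{av}(\W)=\min_{q\in\P(\S)}C(W_q)$ and appeals to the computability and continuity of the DMC capacity $C$, whereas you unfold this into an explicit modulus-of-continuity and grid-search argument for $\max_{P}\min_{q} I(P;\bar{W}_q)$; the two expressions coincide by the minimax theorem, since the mutual information is concave in $P$ and convex in $q$, so your version is more detailed but establishes the same fact.
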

\begin{Remark}
It is important that we restrict $C_{av}$ in Theorem~\ref{av} to the set of all 0-1 AVCs as a function and examine the Borel-Turing computability on this restricted set. 
Because in \cite{BSP20F} it was shown that for all 
$|\X|\geq 2$, $|\Y|\geq 3$, and $|\S|\geq 2$  fixed, the capacity
$C_{av}:\CHc\to\RR_c$ is not Banach-Mazur computable.
\end{Remark}
\begin{proof}[Proof (Theorem~\ref{av})]
We want to design a Turing machine that solves the above task. We choose $x, y, s$ as variables with $1\leq x\leq |\X|$,
$1\leq y\leq |\Y|$, and  $1\leq s\leq |\S|$. Let $\{W(\cdot|\cdot,s)\}_{s\in\S}\in AVC_{0-1}$ be an arbitrary 0-1 AVC. A set of vectors on $\RR_+^{|\Y|}$ is given by 
\[
v_{x,s}=\left(
\begin{array}{c}
W(1|x,s)\\
\vdots\\
W(|\Y| |x,s)\\
\end{array}
\right)
\]
with $x\in\X$ and $s\in\S$. Each of these vectors is a 0-1 vector with only one non-zero element. 
Let $E := \{e_i\}_{i\in|\Y|}$ be the standard basis of \(\RR^{|\Y|}\). 
Then \(E\) forms the set of extreme points of the probability vectors in $\RR^{|\Y|}$. We can identify the set of probability vectors with the set $\P(\Y)$.
We now want to show that the set 
\[
\left\{ \{W(\cdot|\cdot,s)\}_{s\in\S}\in AVC_{0-1}:C_{av} (\{W(\cdot|\cdot,s)\}_{s\in\S})=0 \right\}
\]
is decidable by constructing a Turing machine that decides for each channel $\{W(\cdot|\cdot,s)\}_{s\in\S}$ whether it is symmetrizable or not.
An AVC \(\{W(\cdot|\cdot,s)\}_{s\in\S}\) is called symmetrizable if and only if there exists a DMC $U\in \CHS$, such that 
\begin{equation}\label{symmetrizable}
    \sum_{s\in\S} v_{\tilde{x},s} U(s|x)=\sum_{s\in\S} v_{x,s} U(s|\tilde{x})
\end{equation} 
holds true for all $x,\tilde{x}\in\X$. If a general AVC \(\{W(\cdot|\cdot,s)\}_{s\in\S}\) is symmetrizable, then \(C_{av}(\{W(\cdot|\cdot,s)\}_{s\in\S}) = 0\). First we will show that we can algorithmically decide whether an AVC \(\{W(\cdot|\cdot,s)\}_{s\in\S}\in AVC_{0-1}\) is symmetrizable or not. Let \(\{W(\cdot|\cdot,s)\}_{s\in\S}\in AVC_{0-1}\) be symmetrizable. Define for all $x\in\X$
\[
I_U(x):=\{s\in\S:U(s|x)>0\}.
\]
If $s\in I_U$ holds true, then the vector $v_{x,s}$ appears on the right hand side in \eqref{symmetrizable}. Observe that for all \(x\in\X\) and all \(s\in I_U(x)\), the vector \(v_{x,s}\) is an element of \(E\). 
Due to \eqref{symmetrizable}, for all \(x,\tilde{x}\in\X, s\in I_U(x)\) there must exist an $\tilde{s}$ such that $v_{x,s}=v_{\tilde{x},\tilde{s}}$. Then it follows that $\tilde{s}$ belongs to the set $I_U(\tilde{x})$. We can now swap the roles of $x$ and $\tilde{x}$ and have thus shown that $|I_U(x)|=|I_U(\tilde{x})|$. Since both \(x\) and \(\tilde{x}\) were arbitrary, we have 
\begin{equation}\label{Iquation}
    |I_U(1)| = |I_U(2)| = \dots =|I_U(|\X|)| = \nu.
\end{equation}
Let $V_x:=\{v_{x,s}: s\in \S,~ U(s|x) > 0\} \subseteq E$ with $x\in\X$. Then for all 
$x,\tilde{x}\in\X$ with $x\neq \tilde{x}$ it holds $V_x\cap V_{\tilde{x}}=V_{(x,\tilde{x})}\neq\emptyset$ and because of \eqref{Iquation} it
holds $|V_{(x,\tilde{x})}|=\nu$. Let 
\[
V_{(x,\tilde{x})}=\{ (v_1(x,\tilde{x}),\dots,v_\nu(x,\tilde{x}) \}
\] be a list of the elements. For 
$1\leq x\leq |\X|$ and $1\leq \tilde{x} \leq |\X|$ let $s_{x,\tilde{x}} : \{1,\ldots,\nu\} \rightarrow \S$ be the function with
\begin{align*}
v_{x,s_{x,\tilde{x}}(1)}&=v_1(x,\tilde{x})\\
&\vdots\\
v_{x,s_{x,\tilde{x}}(\nu)}&=v_\nu(x,\tilde{x}).
\end{align*}
Let $f_{\tilde{x}}(s):=U(s|\tilde{x})$ with $\tilde{x}\in\X$ and $s\in\S$, then it holds
\begin{align*}
\sum_{s\in\S} v_{x,s} f_{\tilde{x}}(s)&=\sum_{s\in I_U(\tilde{x})}v_{x,s}f_{\tilde{x}}(s) \\
&=\sum_{t=1}^\nu v_t(x,\tilde{x})f_{\tilde{x}}(s_{x,\tilde{x}}(t))\\
&=\sum_{t=1}^\nu v_t(x,\tilde{x})f_x(s_{x,\tilde{x}}(t)).
\end{align*}
Because $v_t(x,\tilde{x})$ with $1\leq t\leq \nu$ are extreme points of the set $\P(\Y)$, the following applies for $1\leq t\leq \nu$: 
\begin{equation}
  0\neq f_{\tilde{x}}(s_{x,\tilde{x}}(t))=f_j(s_{x,\tilde{x}}(t)).  
\end{equation}
We can now define a new function $f^*$ as follows:
\begin{eqnarray}
f^*_{\tilde{x}} (s_{x,\tilde{x}}(t))= \frac 1\nu \frac {f_{\tilde{x}}(s_{x,\tilde{x}}(t))}{f_{\tilde{x}}(s_{x,\tilde{x}}(t))}.
\end{eqnarray}
It holds
\begin{eqnarray}
f^*_{\tilde{x}} (s_{x,\tilde{x}}(t))= \left\{ \begin{array}{ll} \frac 1\nu & 1\leq t\leq r \\
0 & otherwise.\end{array}\right.
\end{eqnarray}
Then a channel is given by $U^*(s|\tilde{x})=f_{\tilde{x}}^*(s)$ with $s\in\S$ and $\tilde{x}\in\X$. This channel fulfills the following.
\begin{equation}
    \sum_{r=1}^{\nu} v_t(x,\tilde{x}) f^*_{\tilde{x}}(s_{x,\tilde{x}}(t))= \sum_{r=1}^{\nu} v_t(x,\tilde{x}) f^*_{\tilde{x}}(s_{x,\tilde{x}}(t)).
\end{equation}
So \(U^*\) is a symmetrizable channel. With this we can specify an algorithm for the proof of the symmetrizability as follows.
\begin{itemize}
 \item Input $\{W(\cdot|\cdot,s)\}_{s\in\S}$.
 \item Compute $\underline{V}_x := \{v_{x,s}\}_{s\in\S}$.
 \item Compute $\min_{\tilde{x}\neq x} |\underline{V}_x\cap \underline{V}_{\tilde{x}}| =:\underline{\nu}$.
 \begin{itemize}
     \item If $\underline{\nu}=0$, then the channel is not symmetrizable.
     \item If $\underline{\nu}\geq 1$, then test for all $\nu$ with $1\leq\nu\leq\underline{\nu}$, all pairs $1\leq x,\tilde{x}\leq |\X|$ with $x\neq \tilde{x}$, all subsets $V_*\subset (\underline{V}_x\cap \underline{V}_{\tilde{x}})$ of cardinality $\nu$ and all functions of the form
     $f^*_{\tilde{x}}$, if they fulfill the following symmetrizability condition for all $1\leq x,\tilde{x}\leq |\X|$ with $x\neq \tilde{x}$:
     \[
     f^*_{\tilde{x}}(s_{x,\tilde{x}}(t))=f^*_x(s_{x,\tilde{x}}(t))\ \forall 1\leq t\leq\nu.
     \]
 \end{itemize}
\end{itemize}
Clearly, there are only a finite number of options to test. Functions $f_{\tilde{x}}^*$ with $\tilde{x}\in\X$ can be found if and only if the channel can be symmetrized. Using the described subroutine, we can now fully specify an algorithm that computes $C_{av}:AVC_{0-1}\to\RR_c$:
\begin{enumerate}
    \item If we can prove algorithmically that $\{W(\cdot|\cdot,s)\}_{s\in\S}$ is symmetrizable, then we set $C_{av}(\{W(\cdot|\cdot,s)\}_{s\in\S}) = 0$.
      \item If we can prove algorithmically that $\{W(\cdot|\cdot,s)\}_{s\in\S}$ is not symmetrizable, then
      we compute $C_{av}(\{W(\cdot|\cdot,s)\}_{s\in\S})$ as follows \cite{AADT19}:
      For $q\in\P(\S)$, let 
      $W_q(\cdot|\cdot)=\sum_{s\in\S} q(s) W(\cdot|\cdot,s)$. Then it holds:
      \begin{eqnarray*}
        C_{av}(\{W(\cdot|\cdot,s)\}_{s\in\S}) &=& \min_{q\in\P(\S)} C(W_q)\\
        &=& \min_{q\in\RR^{|\S|}, q(s)\geq 0, \sum q(s)=1}C(W_q).
      \end{eqnarray*}
\end{enumerate}
Here $C$ denotes the capacity of a DMC,
which is a computable continuous function (this follows from Shannon's theorem and the continuity of the mutual information). Thus $C_{av}(\{W(\cdot|\cdot,s)\}_{s\in\S})$ is a computable number and we have constructed an algorithm which transforms an algorithmic description of \(\{W(\cdot|\cdot,s)\}_{s\in\S}\) into an algorithmic description of the number $C_{av}(\{W(\cdot|\cdot,s)\}_{s\in\S})$ (see Definition \ref{Borel}).
\end{proof}

We have now shown that $C_{av}$ is Borel-Turing computable. Although this does not say anything about $C_{\max}$, $C_{av}$ is similar in structure to $C_{\max}$. For example, if local randomness is available at the encoder, the maximum-error capacity coincides with the average-error capacity \cite{AADT19}.
We now want to look at the computability of $\Theta$ and $C_{\max}$. We want to show the following.
\begin{Theorem}
$\Theta$ is Borel-Turing computable if and only if $C_{\max}$ is Borel-Turing
computable.
\end{Theorem}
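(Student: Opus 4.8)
The plan is to exploit the two correspondences already established in the paper: Shannon's identity $2^{C_0(W)}=\Theta(G_W)$ (Theorem~\ref{Shannon}) and Ahlswede's equivalence $C_0(W^*)=C_{\max}(\mathcal W)$ (Theorem~\ref{ahlswede}). The decisive observation is that, although these correspondences are \emph{not} effective for an arbitrary computable channel (deciding $W(y|x)>0$ is precisely the undecidable problem exhibited in Lemma~\ref{lemmaTM}), both $\Theta$ and $C_{\max}$ are functions whose arguments are \emph{finite combinatorial objects with decidable support}: a graph is given by its adjacency data, and a member of $AVC_{0-1}$ is given by finitely many $\{0,1\}$-valued matrices. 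I would therefore show that one can pass algorithmically between the two worlds, so that a Borel--Turing algorithm for one capacity, composed with the computable maps $x\mapsto 2^x$ and $x\mapsto\log_2 x$, yields one for the other in the sense of Definition~\ref{Borel}.

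For the direction assuming $\Theta$ is Borel--Turing computable, I start from an arbitrary $\mathcal W=\{W(\cdot|\cdot,s)\}_{s\in\S}\in AVC_{0-1}$ and apply Ahlswede's construction~(ii): choosing the uniform $\pi$ on $\S$, I set $W^*=\sum_{s\in\S}\pi(s)W(\cdot|\cdot,s)$, which satisfies $C_0(W^*)=C_{\max}(\mathcal W)$. The crucial point is that the support of $W^*$ is now \emph{decidable}, since $W^*(y|x)>0$ holds if and only if there exists $s\in\S$ with $W(y|x,s)=1$, a finite test on the $\{0,1\}$-data of $\mathcal W$. Hence the confusability graph $G_{W^*}$ can be constructed effectively from $\mathcal W$ --- in sharp contrast to the general situation --- and Theorem~\ref{Shannon} gives $C_{\max}(\mathcal W)=C_0(W^*)=\log_2\Theta(G_{W^*})$. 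Feeding $G_{W^*}$ into the assumed algorithm for $\Theta$ and composing with $\log_2$ (computable on $[1,\infty)$, where $\Theta\geq 1$ always holds) produces a representation of $C_{\max}(\mathcal W)$.

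For the converse, assuming $C_{\max}$ is Borel--Turing computable, I start from a graph $G$ and reuse the explicit channel $\hat W_G$ built in the proof of Theorem~\ref{AB} (part $A\Rightarrow B$), whose entries are the \emph{rational} numbers $1/|Y(v)|$ and $0$ and which satisfies $G_{\hat W_G}=G$ and $C_0(\hat W_G)=\log_2\Theta(G)$. Because the support of $\hat W_G$ is known exactly, Ahlswede's construction~(i) can be carried out algorithmically: the state set $\S$ is the (finite) set of all $\{0,1\}$-stochastic matrices whose ones occur only where $\hat W_G>0$, and this set is enumerable from the decidable support. This yields a $0$-$1$ AVC $\mathcal W$ with $C_{\max}(\mathcal W)=C_0(\hat W_G)$, so that $\Theta(G)=2^{C_0(\hat W_G)}=2^{C_{\max}(\mathcal W)}$. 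Applying the assumed algorithm for $C_{\max}$ and composing with the computable map $x\mapsto 2^x$ then gives a representation of $\Theta(G)$.

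The routine parts --- that $x\mapsto 2^x$ and $x\mapsto\log_2 x$ are Borel--Turing computable on the relevant domains, and that composing a Borel--Turing computable map with effective discrete preprocessing preserves Borel--Turing computability --- I would only sketch. The genuine content, and the step I expect to require the most care, is verifying that the two transfer constructions are \emph{effective}: one must check that both the confusability graph $G_{W^*}$ in the first direction and the AVC state set $\S$ in the second are obtained purely from decidable (respectively rational) support information, so that none of the uncomputable positivity tests of Lemma~\ref{lemmaTM} are ever invoked. This is exactly the point at which restricting to graphs and $0$-$1$ AVCs, rather than arbitrary computable DMCs, makes the difference and rescues the equivalence.
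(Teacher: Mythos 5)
Your proposal is correct and follows essentially the same route as the paper: the paper's proof simply invokes the two effective transfer maps (graph $\to$ $0$-$1$ AVC and $0$-$1$ AVC $\to$ graph) implicit in the proof of Theorem~\ref{AB}, which are exactly the constructions you spell out via Ahlswede's correspondence and the confusability graph, composed with the assumed algorithm for the other capacity. Your version is in fact slightly more careful than the paper's, since you make explicit both why the reductions avoid the undecidable positivity test of Lemma~\ref{lemmaTM} (decidable $\{0,1\}$, respectively rational, support) and the computable $2^x$/$\log_2$ normalization relating $\Theta$ and $C_{\max}$, which the paper's statement of the reductions elides.
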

\begin{proof} From the proof of Theorem~\ref{AB}, it follows that there exist two Turing machines $TM_1$ and $TM_2$ with the following properties:
\begin{enumerate}
    \item For all $G\in\G$ holds $$\Theta(G)=C_{\max}\big(TM_1(G)\big).$$
    \item For all $\{W(\cdot|\cdot,s)\}_{s\in\S}\in AVC_{0-1}$ holds
    $$C_{\max}(\{W(\cdot|\cdot,s)\}_{s\in\S})=\Theta\big(TM_2(\{W(\cdot|\cdot,s)\}_{s\in\S})\big).$$
\end{enumerate}
So if $C_{\max}$ is Borel-Turing computable, then for any input $G$ for $\Theta$ we can effectively find a suitable input \(\{W(\cdot|\cdot,s)\}_{s\in\S}\) for $C_{\max}$ and then use it as an Oracle. A similar line of reasoning applies if $\Theta$ is Borel-Turing computable.
\end{proof}

\section{Computability of the zero-error capacity with
Kolmogorov oracle}\label{Kolmogorov}
We have shown that the zero error capacity $C_0$ is not Banach-Mazur computable as a function of the channels. The question now arises as to whether a Turing machine with additional input can be found so that, for example, upper bounds for the zero error capacity can be calculated. This question will be briefly discussed in this section. 
In \cite{BD20} we showed that the zero-error
capacity is semi computable if we allow the Kolmogorov
oracle.

To define the Kolmogorov oracle we need a special enumeration for 
\begin{itemize}
	\item The set $\NN$ and
	\item The set of the partial recursive functions $\Phi$: $Domain(\Phi)\subset \NN\to\NN$.
\end{itemize}
The problem is that the natural listing of the set of natural numbers is inappropriate because many numbers in $\NN$ are too large for the natural enumerations. We start with the set of partially recursive functions from 
$\NN$ to $\NN$.
A listing $M_{opt}=\{\Phi_l:\Phi_l\ is\ a\ partial\ recursive\ function, l\in\NN\}$ 
of the partial recursive functions is called optimal listing if for any other recursive 
listing $\{g_l:l\in\NN\}$ of the set of recursive functions there is a constant $C_1$ such that for all $l\in\NN$ holds: 
There exists a $t(l)\in\NN$ with $t(l)\leq C_1 l$  and $\Phi_{t(l)}=g_l$.
This means that all partial recursive functions $\Phi$ have a small Gödel number with respect to the system $M_{opt}$.
Schnorr \cite{Sc74} has shown that such an optimal recursive listing of the set of partial recursive functions exists. The same holds true for the sets of natural numbers $ \NN $.

For $\NN$ let $u_\NN$ be an optimal listing and $\eta: \NN\to\G$ be a numbering of graphs.

For the set $\G$ we define
\(
C_{u_\G}(G):=\min\{k:\eta(u_\NN(k))=G\}.
\)
This is the Kolmogorov complexity generated by $u_\NN$ and $\eta$. 

\begin{Definition}
	The Kolmogorov oracle $O_{K,\G}(\cdot)$ 
	is a function of $\NN$ in the power set of the set of graphs that produces a list 
	\[
	O_{K,\G}(n):=\big\{ G : C_{u_\G}(G) \leq n\big\}
	\]
	for each $n\in\NN$, where the graphs $G$ are listed by size in this list.
\end{Definition}
Let $TM$ be a Turing machine. We say that $TM$ can use the oracle $O_{K,\G}$ if, for every $n\in\NN$, on input $n$ the Turing machine gets the list $O_{K,\G}(n)$.
With $TM(O_{K,\G})$ we denote a Turing Machine that has access to the Oracle 
$O_{K,\G}$.
We now consider for $\lambda\in\RR_c$, $\lambda\geq 0$ the set 
$\L(\lambda)=\{G:\Theta(G)\leq\lambda\}$, i.e. the $\lambda$-level set of the zero error capacity.
We have the following Theorem.
\begin{Theorem}[\cite{BD20}]
	Let $\lambda\in\RR_c, \lambda>0$ then the set $\L(\lambda)$ is decidable with
	a Turing machine $TM^*(O_{K,\G})$. This means there exists a Turing machine
	$TM^*(O_{K,\G})$, such that the set $G(\lambda)$ is computable with this
	Turing machine with oracle.
\end{Theorem}

\begin{Corollary}[\cite{BD20}]
	Let $\lambda\in\RR_c$, $\lambda\geq 0$. Then, the set $\L(\lambda)$
	is semi-decidable for Turning machines with oracle $O_{K,\NN}(O_{K,\G})$.

\end{Corollary}
Noga Alon has asked if the set $\{G:\Theta(G)\leq\lambda\}$ is semi-decidable.
	We gave in \cite{BD20} a positive answer to this question if we can include the oracle. 
 We do not know if $C_0$ is computable concerning $TM(O_{K,\G})$.
 
Let $M\in\NN$ a number with $2^M>|X|$. We set $I_{k,M}=[\frac k{2^M},\frac {k+1}{2^M}]$ for
$k=0,1,\dots, 2^{2M}-1$.
We have the following theorem.
\begin{Theorem}
	There exists an Turing machine $TM^{(1)}(\cdot,O_{K,\NN})$ with
	$TM^{(1)}(\cdot,O_{K,\NN}):\G\to \{0,1,\dots,2^M\}$ such that for
	all $G\in\G$ holds
	\[
	TM^{(1)}(G,O_{K,\NN})= r \Leftrightarrow \Theta(G)\in I_{r,M}
	\]
\end{Theorem}

 Thus, this approach do not directly provide the computability of $C_0$ through $TM$ with Oracle $O_{K,\NN}$. However, we can compute $C_0$ with any given accuracy.
 
   We have seen that in order to prove the computability of $C_0$ or $\Theta$ we need computable converses. In this sense, the recent characterization of Zuiddam \cite{Zui19} using the functions from the asymptotic spectrum of graphs is interesting.

\section{Conclusion and Discussion}\label{conclusions}

Ahlswede's motivation in \cite{A70} was to find another operational representation for the zero-error capacity. The connection he made to AVCs is very interesting. However, it had not helped to calculate the zero-error capacity or to prove specific properties until now.
In particular, it is not clear how Ahlswede's result should help to compute the zero-error capacity of a given DMC, since there is also no general formula for the capacity of an AVC for maximum decoding errors. Even at the algorithmic level, it is not clear whether Ahlswede's result is helpful for computing $C_0(W)$ for $W\in \CH$. Conversely, one could also use the zero-error capacity to investigate AVCs, which are generally even more complex in their behavior.
Even if we had an Oracle $O_{AVC}$ that would supply $C_{\max} (\{W_s\}_{s\in S})$ for the input $\{W_s\}_{s\in S}$, a Turing machine would not be able to algorithmically compute the zero-error capacity of DMCs with this Oracle, despite Ahlswede's result.
This is due to it's formulation: we have to algorithmically generate the finite AVC $\{W_s\}_{s\in S}$ of 0-1-channels from the DMC. Theorems~\ref{compute}, \ref{decide} and
\ref{AB} show, however, that this problem cannot be solved algorithmically, i.e. there are no Turing machines 
\[
TM_*: \CH\to\left\{\{W_s\}_{s\in S}: W_s ~\text{is a finite \(0-1\) AVC}\right\} 
\]
that solve this task.

 \textcolor{black}{It is worth noting that the computability results established in the present work do not apply to the zero-error capacity as a function of a graph \(G\),} but only to the zero-error capacity as a function of the DMC \(W\). Whether the zero-error capacity is computable as a function of a graph \(G\) remains an open question. In most practical applications, however, the desctiption of a channel is available in terms of a DMC, rather than a confusability graph, which seems to be often overlooked in information theoretic considerations regarding the zero-error capacity. Hence, the authors believe that it is beneficial to investigate the problem of computing the zero-error capacity as a function of a DMC.
Further applications of the zero-error capacity can be
found in \cite{WOJPSS19} and in \cite{MaSa07}. 

In information theory, the Turing computability of solutions has been investigated for further questions. In \cite{Z3, Z2, Z4} it was shown for finite state channels that the capacity is not Turing computable depending on the channel parameters. In \cite{Z6, Z5} it was shown for the optimisation of mutual information that the optimiser depending on the channel cannot be calculated with the help of Turing machines even for discrete memoryless channels. Also the capacity-achieving codes cannot be constructed with the help of Turing machines depending on the channel \cite{Z7}. For the zero error capacity, it is an open problem whether it can assume non-computable values even for computable channels. Such behaviour would be particularly interesting for information theory, because then coding and converse cannot be solved algorithmically for such a fixed channel at the same time.  Such behaviour was demonstrated in \cite{Z8} for computable compound channels, in \cite{Z9} for Gaussian channels with coloured noise and in \cite{Z10} for Wiener prediction theory. It is currently absolutely unclear whether the behaviour of the zero error capacity is complex enough to obtain this strongest behaviour with respect to non-Turing computability for the zero error capacity.

The results from this work could be used in \cite{BBD21, Z11} to establish fundamental bounds on the capabilities of computer-aided design and autonomous systems, assuming that they are based on real digital computers. 
Additionally, the results of this work serve as the foundation for the work \cite{BD24} that examines the computability of the reliability function and its associated functions.
The computability of the zero error capacity for fixed input and output alphabets on a Blum-Shub-Smale machine implies that the same problem is decidable on a Blum-Shub-Smale machine \cite{Z1}. For a discussion of further computability results on different machine models, please refer to \cite{Z12}.

\section*{Acknowledgments}
The present work was initiated by discussions with Martin Bossert and Vince Poor at the IEEE International Symposium on Information Theory 2019 in Paris. Holger Boche would like to thank Martin Bossert and Vince Poor for valuable remarks on the importance of zero error capacity for other areas of information theory. 
The authors acknowledge the financial support by the Federal Ministry of Education and Research
of Germany (BMBF) in the programme of “Souverän. Digital. Vernetzt.”. Joint project 6G-life, project identification number: 16KISK002. %
H. Boche and C. Deppe acknowledge the financial support
from the BMBF quantum programme QuaPhySI under Grant
16KIS1598K, QUIET under Grant 16KISQ093, and the QC-
CamNetz Project under Grant 16KISQ077. They were also sup-
ported by the DFG within the project "Post Shannon Theorie
und Implementierung" under Grants BO 1734/38-1 and DE 1915/2-1.
We thank the DFG under Grant BO 1734/20-1 for the support of H. Boche.
Thanks also go to the BMBF within the national initiative under Grant 16KIS1003K for their support of H. Boche and under Grant 16KIS1005 for their support of C. Deppe.
Finally, we thank Yannik B\"ock for his helpful and insightful comments.


\end{document}